\DeclareSymbolFont{rsfscript}{OMS}{rsfs}{m}{n}
\DeclareSymbolFontAlphabet{\mathrsfs}{rsfscript}
\DeclareMathOperator{\rt}{rt}
\DeclareMathOperator{\lspan}{span}
\newcommand{\botequiv}{\equiv}
\newcommand{\scalar}{\odot}
\newtheorem{theorem}{Theorem}
\newtheorem{corollary}[theorem]{Corollary}
\newtheorem{lemma}[theorem]{Lemma}
\newtheorem{proposition}[theorem]{Proposition}
\newtheorem{conjecture}[theorem]{Conjecture}
\numberwithin{theorem}{section}
\newtheorem{definition}{Definition}
\numberwithin{definition}{section}
\newcommand{\PSPACE}{\textnormal{\textsf{PSPACE}}\xspace}
\def\Ddots{\mathinner{\mkern1mu\raise\p@
\vbox{\kern7\p@\hbox{.}}\mkern2mu
\raise4\p@\hbox{.}\mkern2mu\raise7\p@\hbox{.}\mkern1mu}}
\renewcommand{\O}{\mathcal{O}}
\author[Berlinkov, M.V., Ferens, R., Ryzhikov A., and Szyku{\l}a, M.]{%
Mikhail~V.~Berlinkov\affiliationmark{1} \and
Robert~Ferens\affiliationmark{2} \and
Andrew~Ryzhikov\affiliationmark{3}\thanks{Supported by the National Science Centre, Poland under project number 2022/46/E/ST6/00230.} \and
Marek~Szyku{\l}a\affiliationmark{2}\thanks{Supported by the National Science Centre, Poland under project number 2021/41/B/ST6/03691.}
}
\title[Synchronization of partial DFAs]{Synchronization of strongly connected\\partial DFAs and prefix codes}
\affiliation{%
Independent Researcher, Canada\\
University of Wroc{\l}aw, Wroc{\l}aw, Poland\\
University of Warsaw, Warsaw, Poland}
\keywords{{\v{C}ern\'{y}} conjecture, literal automaton, partial automaton, prefix code, rank conjecture, reset threshold, reset word, synchronizing automaton, synchronizing word}
\begin{document}
\publicationdata{vol. 28:2}{2026}{34}{10.46298/dmtcs.16465}{2025-09-04; 2025-09-04; 2026-05-27}{2026-06-01}
\maketitle
\begin{abstract}
We study synchronizing partial DFAs, which extend the classical concept of synchronizing complete DFAs and are a special case of synchronizing unambiguous NFAs. A partial DFA is called synchronizing if it has a word (called a \emph{reset word}) whose action brings a non-empty subset of states to a unique state and is undefined for all other states.
The class of strongly connected partial DFAs is precisely the class of DFAs recognizing the Kleene star of prefix codes.
While in the general case the problem of checking whether a partial DFA is synchronizing is \PSPACE-complete, we show that in the strongly connected case, this problem can be efficiently reduced to the same problem for a complete DFA. Using combinatorial, algebraic, and formal languages methods, we develop techniques that relate main synchronization problems for strongly connected partial DFAs to the same problems for complete DFAs. In particular, this includes the \v{C}ern\'{y} and the rank conjectures, the problem of finding a reset word, and upper bounds on the length of the shortest reset words of literal automata of finite prefix codes. We conclude that solving fundamental synchronization problems is equally hard in both models, as an essential improvement of the results for one model implies an improvement for the other.
\end{abstract}
\section{Introduction}

Synchronization is an important concept in various domains of computer science that consists in regaining control over a system by applying (or observing) a specific set of input instructions. These instructions are usually required to lead the system to a fixed state no matter in which state it was at the beginning. This idea has been studied for automata (deterministic \cite{Cerny1964,Volkov2022Survey}, nondeterministic \cite{Imreh1999}, unambiguous \cite{Beal2008}, weighted and timed \cite{Doyen2014WeightedTimedAutomata}, partially observable~\cite{Larsen2014}, register \cite{Babari2016}, nested word~\cite{Chistikov2019}), parts orienting in manufacturing~\cite{Ep1990,Na1986}, testing of reactive systems~\cite{Sandberg2005Survey}, variable length codes~\cite{BPR2010CodesAndAutomata}, and Markov Decision Processes~\cite{Doyen2014,Doyen2019}.

In this paper, we study the synchronization of partial DFAs, which are a generalization of complete DFAs and a special case of unambiguous NFAs.
We are motivated by applications of this model and its connections with others, as well as the need for new techniques applied to partial DFAs.
The problems for strongly connected partial DFAs are a motivation for further development and generalization of the methods applied for complete DFAs, since, as we show, these models are closely related.
We also hope that our methods will serve as a step toward studying a wider class of strongly connected unambiguous~NFAs.

\subsection{Observing a reactive system}

Consider a finite-state reactive system modeled by a partial DFA (by partial we mean that for some states there can be no outgoing transitions corresponding to some letters).
The observer knows the structure of the DFA but does not know its current state.
At every step, the DFA reads a letter (also known to the observer) and transits to another state.
The observer wants to eventually learn the actual state of the DFA. Since the DFA is deterministic, once a state is known, it will be known forever.

In this setting, the actual state is known if and only if the system reads a \emph{reset word} -- a word that transits a non-empty set of states to a single state and is undefined for all other states. The presence of undefined transitions indicates that certain actions cannot be performed from certain states, which can be essential for synchronization.

For several identical systems running in parallel and receiving the same input (but possibly starting from different states), the presence of a reset word in the input guarantees that all systems end up in the same state.
This idea can be used in robotics, where a sequence of passive obstacles is used for orienting a large number of arbitrarily rotated parts arriving simultaneously on a conveyor belt (\cite{Ep1990,Na1986}, see also \cite{Volkov2008Survey} for an illustrative example).

Reactive systems (such as Web servers, communication protocols, operating systems and processors) are systems developed to run without termination and interact through visible events, so it is natural to assume that the system can return to any state from any other state (NFAs with this property are called \emph{strongly connected}).
The probabilistic version of the described problem for strongly connected partial DFAs has been considered in the context of $\varepsilon$-machines \cite{TrCr11Exact}. In particular, the observer knows the state of an $\varepsilon$-machine precisely if and only if a reset word for the underlying partial DFA was applied.
Some experimental results on finding shortest reset words for partial DFAs were recently presented in~\cite{Shabana2019}.

\subsection{Synchronizing automata}

There exist several definitions that generalize the notion of a synchronizing complete DFA to larger classes of NFAs. In this subsection, we describe the notion which preserves most of the properties of the complete DFAs case, and in~\Cref{subs:car} we briefly describe alternative notions.

An NFA is called \emph{unambiguous} if for every two states $p, q$ and every word $w$, there is at most one path from $p$ to $q$ labeled by $w$ \cite{Beal2008}.
In the strongly connected case, this is equivalent to a more classical definition of an unambiguous NFA with chosen initial and final states, if there is a unique initial state and a unique final state.
An unambiguous NFA is called \emph{synchronizing} if there exist two non-empty subsets $C, R$ of its states and a word $w$ (called a \emph{reset} word) such that its action maps every state in $C$ exactly to the whole set $R$, and is undefined for all states outside $C$ \cite{Beal2008}.
For partial DFAs, the set $R$ has size one~\cite{BPR2010CodesAndAutomata}, and, for complete DFAs, the set $C$ is also the whole set of states~\cite{Volkov2022Survey}.

Partial DFAs are thus a natural intermediate class between unambiguous NFAs and complete DFAs.
The bounds on the length of shortest reset words in strongly connected partial DFAs have not been studied before.
The famous \v{C}ern\'{y} conjecture, which is one of the most longstanding open problems in automata theory, states that for an $n$-state complete DFA we can always find a reset word of length at most $(n-1)^2$, unless there are no reset words. The best known upper bound is cubic in~$n$~\cite{Shitov2019,Szykula2018ImprovingTheUpperBound}, and the problem of deciding whether a complete DFA is synchronizing is solvable in time quadratic in~$n$ \cite{Volkov2022Survey}.
For an $n$-state strongly connected unambiguous NFA, the best known upper bound on the length of a shortest reset word is $n^5$, and the existence of a reset word is verifiable in polynomial time~\cite{Ryzhikov2019WORDS}.
The same upper bound holds for the length of the shortest mortal words in strongly connected unambiguous NFAs~\cite{KieferMascle2019}, whereas partial DFAs admit a tight quadratic bound~\cite{Rystsov1983PolynomialCompleteProblems}.

\subsection{Synchronizing codes}\label{subs:sync-codes}

A \emph{variable-length code} $X$ (which we call a \emph{code}) is a set of finite words over a finite alphabet $\Sigma$, such that no word over $\Sigma$ can be written as a concatenation of codewords of $X$ in two different ways.
Such codes (especially Huffman codes \cite{Huffman1952}) are widely used for lossless data compression.
Since the lengths of codewords can be different, one transmission error can spoil the whole decoding process, causing a major data loss.
Also, for general codes, decoding a part of a message (e.g., a segment of a compressed video stream) is not possible without decoding the whole message.

These issues can be addressed by using synchronizing codes. A code $X$ is called \emph{synchronizing} if there exists a \emph{synchronizing} word $w \in X^*$ such that for every $uwv \in X^*$ we have $uw, wv \in X^*$.
The occurrence of the word $ww$ thus stops error propagation and allows parallel decoding of the two parts of the message.
More generally, each appearance of the word $ww$ in a coded message allows to run the decoding independently from the position after the first $w$.

A code is called \emph{prefix} if none of its codewords is a prefix of another codeword. Such codes allow to obtain the correct partition of a message into codewords one by one by going from left to right.
Even if a code is synchronizing, there are no guarantees that a synchronizing word will appear in a message. Codes where every long enough concatenation of codewords is synchronizing are called \emph{uniformly synchronizing} \cite{BPR2010CodesAndAutomata,Bruyere1998}.
A prefix code is called \emph{maximal} if it is not a subset of another prefix code.
All non-trivial uniformly synchronizing finite prefix codes are non-maximal \cite{BPR2010CodesAndAutomata}.

\subsection{Automata for the Kleene star of codes} \label{subs:stars} 

A code recognized by an NFA as a language is called \emph{recognizable}.
In particular, every finite code is recognizable.
To argue about synchronization properties of a recognizable code $X$, special NFAs recognizing~$X^*$ are studied.
These NFAs have a unique initial and final state $r$ such that the set of words labeling paths from $r$ to itself coincides with $X^*$, thus they are also strongly connected.
Provided a recognizable code $X$, an NFA with the described properties can be chosen to be unambiguous \cite{BPR2010CodesAndAutomata}. Moreover, this NFA can be chosen to be a partial (respectively, a complete) DFA if and only if $X$ is a recognizable prefix (respectively, recognizable maximal prefix) code \cite{BPR2010CodesAndAutomata}.

For such an unambiguous NFA with the properties as above, $X$ is synchronizing if and only if the NFA is synchronizing, and the length of a shortest synchronizing word for $X$ is at most the length of a shortest reset word of the NFA plus twice its number of states \cite[Chapter~4]{BPR2010CodesAndAutomata}.

Finite prefix codes admit a direct construction of partial DFAs with the described properties, called literal (or prefix) automata. Let $X$ be a finite prefix code over an alphabet $\Sigma$.
The \emph{literal automaton} $\mathrsfs{A}_X = (Q,\Sigma,\delta)$ is constructed as follows.
The set of states $Q$ is the set of all proper prefixes of the words in $X$, and the transition function is defined as follows: $\delta(q,x) = qx$ if $qx \notin X$ and $qx$ is a proper prefix of a word in $X$, $\delta(q,x) = \varepsilon$ if $qx \in X$, and $\delta(q,x) = \bot$ otherwise.
The state corresponding to the empty prefix $\varepsilon$ is called the \emph{root state}.
The \emph{height} of a literal automaton is the length of a longest path of its transitions without repetition of states; equivalently, this is the length of the longest word in~$X$ minus one.
Note that 
the number of states of $\mathrsfs{A}_X$ is at most the total length of all codewords of $X$, which allows to directly transfer upper bounds from literal automata to finite prefix codes.
An example of a literal automaton is shown in~\Cref{fig:examples} (right).
The literal automaton of a prefix code can be used as a decoder for this code by adding output labels to the transitions \cite{BPR2010CodesAndAutomata}.

\subsection{Carefully synchronizing DFAs}\label{subs:car}

For general NFAs, synchronizability can be generalized to D$i$-directability for $i = 1, 2, 3$ \cite{Imreh1999}. As discussed in \cite[Section~6.3]{Vorel2016}, for partial DFAs the notions of D$1$- and D$3$-directing words both coincide with \emph{carefully synchronizing} words. These are words sending every state of a partial DFA to the same state, without using any undefined transitions. A D$2$-directing word for a partial DFA is either carefully synchronizing or mortal (undefined for every state). The definitions of carefully synchronizing and D$2$-directing words are different from our definition of synchronizing words for partial DFAs.

A carefully synchronizing word can be applied to a partial DFA at any moment without the risk of using an undefined transition. This comes at a high cost: even for strongly connected partial DFAs, the shortest carefully synchronizing words can have exponential length \cite[Proposition~9]{Vorel2016}, and the problem of checking the existence of such a word is \PSPACE-complete \cite[Theorem~12]{Vorel2016}, in contrast with the case of complete DFAs.
On the contrary, the notion of a synchronizing partial DFA preserves most of the properties of a synchronizing complete DFA, at least in the strongly connected case. Note that every carefully synchronizing word is synchronizing, but the converse is not true.

While for complete DFAs the property of being strongly connected is not essential for many synchronization properties \cite{Volkov2022Survey}, the situation changes dramatically for partial DFAs.
Partial DFAs that are not strongly connected can have exponentially long shortest reset words, and the problem of checking the existence of a reset word is \PSPACE-complete \cite{Berlinkov2014OnTwoAlgorithmicProblems}.
Thus, strong connectivity is indeed necessary to obtain good bounds and algorithms. As explained above, for reactive systems and prefix codes this requirement comes naturally.

\subsection{Our contribution and organization of the paper}

We prove a number of results for strongly connected partial DFAs connected with the \v{C}ern\'{y} conjecture and its generalizations.
Where possible, we suggest methods that allow to relate the partial case with the complete case, instead of directly reproving known results in this more general setting.
In this way, we do not have to go into the existing proofs, and future findings concerning the complete case should be often immediately transferable to the partial case.

We start from basic properties and introduce more advanced techniques along with their applications.
First, we investigate the rank conjecture, which is a generalization of the \v{C}ern\'{y} conjecture from the case of synchronizing automata to the case of all automata. We show that the rank conjecture for complete DFAs implies an analogous statement for partial DFAs (\Cref{thm:rank_conjecture_transfer}).
For this, we introduce our first basic tool called a \emph{fixing automaton}, which is a complete DFA obtained from a partial one and sharing some of its properties.
Our result shows a general way for transferring upper bounds from the case of complete DFAs to partial DFAs, e.g., we immediately get that the rank conjecture holds true for partial Eulerian automata (\Cref{cor:eulerian}).

To connect the \v{C}ern\'{y} conjecture for the cases of complete and partial DFAs, we need more involved techniques, since the construction developed for the rank conjecture does not preserve the property of being synchronizing. We introduce a \emph{collecting automaton}, which extends the concept of the fixing automaton. We use it to show that all upper bounds on the length of the shortest reset words, up to a subquadratic additive component (linear in the case of the \v{C}ern\'{y} bound), are equivalent for partial and complete DFAs (\Cref{thm:cerny_transfer}).
We also use it to prove that the problems of deciding synchronizability and finding a reset word of a strongly connected partial DFA can be effectively reduced to the same problems for a complete DFA (\Cref{sec:alg_issues}).
This also means that possible improvements of the complexity of the best-known algorithms for these problems for complete DFAs should directly apply to partial DFAs.

As discussed in \Cref{subs:sync-codes} and \Cref{subs:stars}, one of the main motivations for studying synchronization of strongly connected partial DFAs is a direct correspondence with synchronization of recognizable prefix codes. An important special case is when the prefix code is finite. We investigate it by studying literal automata of finite prefix codes and obtain stronger upper bounds than those for the general case of strongly connected partial (or complete) DFAs. We show that the length of the shortest reset words for literal automata of finite prefix codes is at most $\O(n \log^3 n)$, where $n$ is the number of states of the automaton (\Cref{cor:literal_aut}).
This upper bound asymptotically matches the strongest known upper bound for maximal prefix codes (which becomes now a special case), but it is not transferred directly, as key statements do not hold in the same way for non-maximal prefix codes.
To prove it, we first show that the literal automaton of a finite prefix codes admits a word of linear length whose action sends all the states to a non-empty subset of small size (\Cref{thm:log-rank}). It establishes a natural combinatorial property of finite prefix codes and constitutes the most involved proof in this paper. Once we show the existence of such a word, we use one more construction called the \emph{induced automaton}, which is a generalization of linear algebraic techniques to the case of partial DFAs (\Cref{sec:induced}). This particular construction extends the existing techniques originally developed for complete DFAs but simultaneously comes with a new simpler and more general proof.

Finally, we show that the lower bounds for strongly connected partial DFAs are asymptotically the same even if we ensure the existence of undefined transitions (\Cref{sec:lower_bounds}).
In other words, undefined transitions do not help in general, as we cannot significantly improve upper bounds for such automata without doing that for the complete case.

This paper is the full version of a conference paper \cite{BFRS21SynchronizingStronglyConnectedPartialDFAs}.

\section{Preliminaries}

A \emph{partial deterministic finite automaton} $\mathrsfs{A}$ (a partial DFA for short) is a triple $(Q,\Sigma,\delta)$, where $Q$ is a set of \emph{states}, $\Sigma$ is an \emph{input alphabet}, and $\delta$ is partial function $Q \times \Sigma \rightharpoonup Q$ called the \emph{transition function}. Note that the automata we consider do not have any initial or final states.
We extend $\delta$ to a partial function $Q \times \Sigma^* \rightharpoonup Q$ as usual: we set $\delta(q, wa) = \delta(\delta(q,w), a)$ for $w \in \Sigma^*$ and $a \in \Sigma$.
For a state $q \in Q$ and a word $w \in \Sigma^*$, if the action $\delta(q,w)$ is undefined, then we write $\delta(q,w)=\bot$.
Note that if $\delta(q,w) = \bot$ for a word $w \in \Sigma^*$, then $\delta(q,wu) = \bot$ for every word $u \in \Sigma^*$.
A DFA is \emph{complete} if all its transitions are defined, and it is \emph{incomplete} otherwise.
An partial DFA is \emph{strongly connected} if for every two states $p,q \in Q$ there is a word $w \in \Sigma^*$ such that $\delta(p,w) = q$.

By $\Sigma^i$ we denote the set of all words over $\Sigma$ of length exactly $i$, and by $\Sigma^{\le i}$ the set of all words over $\Sigma$ of length at most $i$.
For two sets of words $W_1, W_2 \subseteq \Sigma^*$, by $W_1 W_2$ we denote their product $\{w_1 w_2 \in \Sigma^* \mid w_1 \in W_1, w_2 \in W_2\}$.
The empty word is denoted by~$\varepsilon$.
Throughout the paper, by $n$ we always denote the number of states $|Q|$.

Given $S \subseteq Q$, the \emph{image} of $S$ under the action of $w$ is $\delta(S,w) = \{\delta(q,w) \mid q \in S,\ \delta(q,w) \neq \bot\}$.
The \emph{preimage} of $S$ under the action of $w$ is $\delta^{-1}(S,w) = \{q \in Q \mid \delta(q,w) \in S\}$.
Since $\mathrsfs{A}$ is deterministic, for disjoint subsets $S,T \subseteq Q$, their preimages under the action of every word $w \in \Sigma^*$ are also disjoint.

The \emph{rank} of a word $w$ is the size of the image of $Q$ under the action of this word, i.e., $|\delta(Q,w)|$.
In contrast with complete DFAs, partial DFAs may admit words of rank zero; these words are called \emph{mortal}.
Words of non-zero rank are called \emph{non-mortal}.
A word of rank $1$ is called \emph{reset}, and if the DFA admits such a word then it is called \emph{synchronizing}.
The \emph{reset threshold} $\rt(\mathrsfs{A})$ is the length of the shortest reset words of $\mathrsfs{A}$.

We say that a word $w$ \emph{compresses} a subset $S \subseteq Q$, if $\delta(S,w) \neq \emptyset$ and $|\delta(S,w)| < |S|$.
A subset that admits a compressing word is called \emph{compressible}.
There are two ways to compress a subset $S \subseteq Q$ with $|S| \ge 2$ in a partial DFA.
One possibility is the \emph{pair compression}, which is the same as in the case of a complete DFA, i.e., mapping at least two states $p,q \in S$ to the same state (but not to~$\bot$).
The other possibility is to map at least one state from $S$, but not all states from $S$, to~$\bot$.
Sometimes, a subset can be compressed in both ways simultaneously.
We say that a word $w$ \emph{synchronizes} a set $S \subseteq Q$ if $|\delta(S,w)| = 1$.

\begin{figure}[htb]\centering
\includegraphics{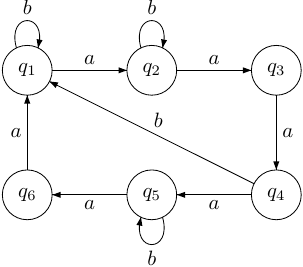}\hspace{2cm}
\includegraphics{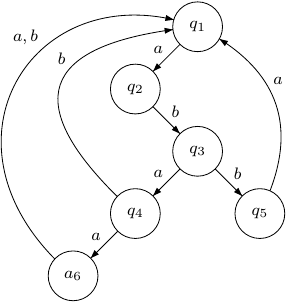}
\caption{\normalsize Left: a strongly connected partial 6-state binary DFA;
right: the literal automaton of the prefix code $\{abaaa,abaab,abab,abba\}$.}
\label{fig:examples}
\end{figure}

An example of a strongly connected partial DFA is shown in~\Cref{fig:examples} (left).
We have two undefined transitions: $\delta(q_3,b)=\delta(q_6,b)=\bot$.
The unique shortest reset word is $bab$: $\delta(Q,b)=\{q_1,q_2,q_5\}$, $\delta(Q,ba)=\{q_2,q_3,q_6\}$, and $\delta(Q,bab) = \{q_2\}$.
However, in contrast with the case of a complete DFA, the preimage $\delta^{-1}(\{q_2\},bab) = \{q_1,q_4\}$ is not $Q$.

\section{Upper bounds}\label{sec:upper_bounds}

\subsection{Inseparability equivalence}

Let $\mathrsfs{A}=(Q,\Sigma,\delta)$ be a partial DFA.
We define the inseparability relation $\botequiv$ on $Q$.
Two states are \emph{separable} if there is a word whose action is defined for exactly one of them.

\begin{definition}
The \emph{inseparability equivalence} $\botequiv$ on $Q$ is defined as follows:
\[ p \botequiv q \quad\text{ if and only if }\quad \forall_{u \in \Sigma^*}\ \left(\delta(p,u) \ne \bot \Leftrightarrow \delta(q,u) \ne \bot\right) .\]
\end{definition}

The same relation is considered in~\cite[Section~1.4]{BPR2010CodesAndAutomata} if all states of the partial DFA are final.
Also, if we replace $\bot$ with a unique final state, then $\botequiv$ is the well-known Myhill-Nerode congruence on words in a complete DFA.
Under a different terminology, it also appears in the context of $\varepsilon$-machines, where non-equivalent states are called \emph{topologically distinct}~\cite{TrCr11Exact}.

For a subset $S \subseteq Q$, let $\kappa(S)$ be the number of equivalence classes that have a non-empty intersection with $S$.
In the partial DFA from~\Cref{fig:examples} (left), we have three equivalence classes, namely, $q_1 \botequiv q_4$, $q_2 \botequiv q_5$, and $q_3 \botequiv q_6$.

Our first auxiliary lemma states that every subset $S \subseteq Q$ with $\kappa(S) \ge 2$ can be compressed by a short word which decreases the number of intersected equivalence classes.
This is done by mapping to $\bot$ all the states of $S$ from at least one equivalence class, but not the whole set $S$.
A linear upper bound can be inferred from a standard analysis of the corresponding Myhill-Nerode congruence, but we will need a more precise bound in terms of $\kappa(S)$.

\begin{lemma}\label{lem:voiding_lemma}
Let $\mathrsfs{A}=(Q,\Sigma,\delta)$ be a partial DFA, and let $S \subseteq Q$ be a subset such that $\kappa(S) \ge 2$.
Then there is a word $w \in \Sigma^*$ of length at most $\kappa(Q)-\kappa(S)+1 \le n-|S|+1$ and such that $1 \le \kappa(\delta(S,w)) < \kappa(S)$.
\end{lemma}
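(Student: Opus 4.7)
My plan is to reduce the problem to the quotient partial DFA $\mathrsfs{A}' = \mathrsfs{A}/{\botequiv}$ and then produce the required word via a Moore-style partition refinement. First I would verify that $\botequiv$ is a congruence: if $p \botequiv q$ and $\delta(p,a)$ is defined, then testing the definition at $u=a$ shows that $\delta(q,a)$ is defined as well, and testing at every word $au$ yields $\delta(p,a) \botequiv \delta(q,a)$. Hence the quotient $\mathrsfs{A}' = (Q', \Sigma, \delta')$ is a partial DFA with $|Q'| = \kappa(Q)$ states in which every two distinct states are $\botequiv$-inequivalent. Writing $\overline{S}$ for the image of $S$ under the projection (so $|\overline{S}| = \kappa(S)$), one immediately verifies the identity $\kappa(\delta(S,w)) = |\delta'(\overline{S},w)|$ for every word $w$. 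Thus it suffices to find $w$ of the claimed length that is defined on some element of $\overline{S}$ and undefined on another, as then $\delta'(\overline{S},w)$ is a non-empty strict subset of $\overline{S}$.

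To produce such a definedness-separating word, I would use a Moore-style refinement sequence on $Q'$. For each $i \geq 0$, define $\mathcal{P}_i$ by $p \sim_i q$ iff for every $u \in \Sigma^{\leq i}$ the values $\delta'(p,u)$ and $\delta'(q,u)$ are either both defined or both undefined. Then $\mathcal{P}_0 = \{Q'\}$, each $\mathcal{P}_{i+1}$ refines $\mathcal{P}_i$, and the limit partition is discrete because the states of $\mathrsfs{A}'$ are pairwise $\botequiv$-inequivalent. By the standard fixed-point argument, whenever $\mathcal{P}_i = \mathcal{P}_{i+1}$ one already has $\mathcal{P}_i = \mathcal{P}_\infty$, so until the discrete partition is reached, $|\mathcal{P}_i|$ strictly increases at every step. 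Let $\ell$ be the smallest index at which two states of $\overline{S}$ lie in different classes of $\mathcal{P}_\ell$; this exists since $|\overline{S}| = \kappa(S) \geq 2$. Minimality forces $\overline{S}$ to sit inside one class of each $\mathcal{P}_i$ for $i < \ell$, so that class has size at least $\kappa(S)$, whence $|\mathcal{P}_i| \leq \kappa(Q) - \kappa(S) + 1$. Combining this cap with the strict growth from $|\mathcal{P}_0| = 1$ gives $\ell \leq |\mathcal{P}_{\ell-1}| \leq \kappa(Q) - \kappa(S) + 1$, and the definition of $\ell$ then delivers a word $w \in \Sigma^\ell$ that is defined on one state of $\overline{S}$ and undefined on another, exactly as required.

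The secondary inequality $\kappa(Q)-\kappa(S)+1 \leq n-|S|+1$ reduces, by summing $|C \cap S| - 1 \leq |C|-1$ over the $\botequiv$-classes $C$ meeting $S$, to $|S|-\kappa(S) \leq n - \kappa(Q)$. The main obstacle I expect is sharpening the length estimate: Moore's argument in its plain form only gives a bound of order $\kappa(Q)-1$, and obtaining the tight $\kappa(Q)-\kappa(S)+1$ hinges on the observation that while $\overline{S}$ remains unsplit, one class of each intermediate $\mathcal{P}_i$ already absorbs at least $\kappa(S)$ states, leaving only $\kappa(Q)-\kappa(S)$ slots for the remaining classes; this structural input is what turns the generic Moore bound into the one the lemma claims.
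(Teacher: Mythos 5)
Your proposal is correct and takes essentially the same route as the paper's proof: a Moore-style refinement of the length-bounded definedness equivalences, with the sharpened length bound $\kappa(Q)-\kappa(S)+1$ coming from the same key observation that while $S$ remains unsplit, one block absorbs at least $\kappa(S)$ of the $\botequiv$-classes, so the partition has at most $\kappa(Q)-\kappa(S)+1$ blocks; working in the quotient automaton $\mathrsfs{A}/_{\botequiv}$ is only a cosmetic repackaging of the paper's direct counting via $\kappa$. (One harmless wording slip: $\delta'(\overline{S},w)$ need not be a \emph{subset} of $\overline{S}$, but the cardinality estimate $1 \le |\delta'(\overline{S},w)| < |\overline{S}|$, which is all you use, does hold.)
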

\begin{proof}
We define auxiliary relations on $Q$ that are restricted to words of certain lengths.
For $k \ge 0$, we define:
\[ p \botequiv_k q \quad\text{ if and only if }\quad \forall_{u \in \Sigma^{\le k}}\ \left(\delta(p,u) \ne \bot \Leftrightarrow \delta(q,u) \ne \bot\right) .\]

Clearly, $\botequiv_0$ has all states in one equivalence class, and there is some $m$ such that $\botequiv_m$ is the same as $\botequiv$ because the number of different actions of words is finite.
Also, for every $p,q \in Q$ and $k \ge 0$, if $q \not\botequiv_k p$, then $q \not\botequiv_{k+1} p$.

We show that if for some $k$, $\botequiv_k$ is the same as $\botequiv_{k+1}$, then the chain of relations stabilizes at~$\botequiv_k$, i.e., all relations
$\botequiv_k,\botequiv_{k+1},\botequiv_{k+2},\ldots$
are the same as $\botequiv$.
Assume for a contradiction that $\botequiv_k$ is the same as $\botequiv_{k+1}$, but $\botequiv_{k+2}$ is different from them.
This means that there are two distinct states $p,q \in Q$ such that $p \botequiv_k q$, $p \botequiv_{k+1} q$, and $p \not\botequiv_{k+2} q$.
Hence, there exists a word $u$ of length $k+2$ such that, without loss of generality, $\delta(p,u)=r \ne \bot$ and $\delta(q,u)=\bot$.
Write $u = av$, where $a \in \Sigma$ and $v \in \Sigma^*$.
Then $\delta(\{p,q\},a) = \{p',q'\}$ for some distinct states $p',q' \in Q$.
Since $|v|=k+1$, $\delta(p',v)=r$ and $\delta(q',v)=\bot$, we have $p' \not\botequiv_{k+1} q'$, and from our assumption that $\botequiv_k$ is the same as $\botequiv_{k+1}$, we also have $p' \not\botequiv_k q'$.
This means that there exists a word $v'$ of length at most $k$ such that $\delta(p',v')=r' \ne \bot$ and $\delta(q',v')=\bot$, or vice versa.
But then the action of $av'$ maps exactly one of $p$ and $q$ to $\bot$.
Since $|av'| \le k+1$, this yields a contradiction with $p \botequiv_{k+1} q$.

If for some $k$, $\botequiv_{k+1}$ is different from $\botequiv_k$, then the number of equivalence classes in $\botequiv_{k+1}$ is larger by at least one than the number of equivalence classes in $\botequiv_k$.
Note that the number of equivalence classes is limited by $\kappa(Q)$, so $\botequiv_{\kappa(Q)-1}$ (and every further relation) is the same as $\botequiv$.

Observe that, for a $k \ge 0$, if $S$ is not contained in a single equivalence class of $\botequiv_k$, then there exists a word $w \in \Sigma^{\le k}$ such that, for some states $p,q \in S$, we have $\delta(p,w)\neq\bot$ and $\delta(q,w)=\bot$, thus $w$ satisfies $1 \le \kappa(\delta(S,w)) < \kappa(S)$.

We consider $\botequiv_{\kappa(Q)-\kappa(S)+1}$.
It has at least $\kappa(Q)-\kappa(S)+2$ equivalence classes.
Therefore, since there are at most $\kappa(Q)-\kappa(S)$ equivalence classes not intersecting $S$, this relation must have at least two classes that intersect $S$, so $S$ is not contained in a single class.
It follows that there is a word $w$ of length at most $\kappa(Q)-\kappa(S)+1$ satisfying the lemma.

Finally, we have $\kappa(Q)-\kappa(S)+1 \le n - |S| + 1$ since there are at most $n-|S|$ equivalence classes in $Q \setminus S$, so $\kappa(Q) \le n-|S|+\kappa(S)$.
\end{proof}

By an iterative application of~\Cref{lem:voiding_lemma}, we can easily compress any subset of states to a subset of a single equivalence class.

\begin{corollary}\label{cor:voiding_lemma}
Let $\mathrsfs{A}=(Q,\Sigma,\delta)$ be a partial DFA, and let $S \subseteq Q$ be a non-empty subset.
There is a word $w$ of length at most $(\kappa(S)-1)(\kappa(Q)-\kappa(S)/2)$ such that $\delta(S,w)$ is non-empty and is contained in one inseparability class.
\end{corollary}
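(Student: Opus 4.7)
The plan is to apply Lemma~\ref{lem:voiding_lemma} iteratively. Starting with $S_0 = S$, while $\kappa(S_i) \ge 2$, I use Lemma~\ref{lem:voiding_lemma} to obtain a word $w_{i+1}$ of length at most $\kappa(Q) - \kappa(S_i) + 1$ such that $S_{i+1} := \delta(S_i, w_{i+1})$ is non-empty and $\kappa(S_{i+1}) < \kappa(S_i)$. Because $\kappa(S_i)$ is a strictly decreasing sequence of positive integers starting at $\kappa(S)$, the loop terminates after some $t \le \kappa(S) - 1$ steps with $\kappa(S_t) = 1$. Taking $w = w_1 w_2 \cdots w_t$ (the empty word if $\kappa(S) = 1$ already) yields $\delta(S, w) = S_t$, which is non-empty and contained in a single inseparability class, as required.

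The remaining task is to bound the total length $|w| \le \sum_{i=0}^{t-1}(\kappa(Q) - \kappa(S_i) + 1)$. Writing $k_i = \kappa(S_i)$ and $K = \kappa(Q)$, I need to maximize $\sum_{i=0}^{t-1}(K - k_i + 1)$ over strictly decreasing sequences $\kappa(S) = k_0 > k_1 > \cdots > k_t = 1$ of positive integers. I claim the worst case is the densest sequence $k_i = \kappa(S) - i$ with $t = \kappa(S) - 1$: if some step skips a value $m$ with $k_{i+1} < m < k_i$, then inserting $m$ as an additional term changes nothing else but adds a summand $K - m + 1 \ge 1 > 0$, since $m \le K$. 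In the dense case the sum evaluates as
\[ \sum_{i=0}^{\kappa(S)-2}\bigl(K - \kappa(S) + 1 + i\bigr) = (\kappa(S)-1)(K - \kappa(S) + 1) + \tfrac{(\kappa(S)-1)(\kappa(S)-2)}{2} = (\kappa(S)-1)(\kappa(Q) - \kappa(S)/2), \]
which is exactly the claimed upper bound.

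No serious obstacle is expected: the construction is a straightforward greedy iteration of Lemma~\ref{lem:voiding_lemma}, and the length estimate reduces to a short arithmetic calculation. The only point requiring a moment's care is verifying that inserting missing intermediate values of $\kappa$ can only increase the total, but this follows immediately from $K \ge k_i$, which holds because $S_i \subseteq Q$ implies $\kappa(S_i) \le \kappa(Q)$.
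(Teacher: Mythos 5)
Your proof is correct and follows essentially the same route as the paper: iterate Lemma~\ref{lem:voiding_lemma} until the image lies in a single class, with the worst case being the dense descent through $\kappa(S),\kappa(S)-1,\ldots,2$ intersected classes, which sums to $(\kappa(S)-1)(\kappa(Q)-\kappa(S)/2)$. You merely spell out (correctly) the monotonicity argument and arithmetic that the paper's one-line proof leaves implicit.
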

\begin{proof}
In the worst case, we apply at most $\kappa(S)-1$ times \Cref{lem:voiding_lemma} for subsets intersecting $\kappa(S),\kappa(S)-1,\ldots,2$ equivalence classes.
\end{proof}

\subsection{Fixing automaton}

The other possibility of compressing a subset in a partial DFA is the classical pair compression.
This is the only way to compress a subset with all states in one equivalence class, which is always the case in a complete DFA.

Our tool to deal with this way of compression is the \emph{fixing automaton}.
This is a complete DFA obtained from a partial one, defined as follows.

\begin{definition}[Fixing automaton]
For a partial DFA $\mathrsfs{A}(Q,\Sigma,\delta)$, the \emph{fixing automaton} is the complete DFA $\mathrsfs{A}^\mathrm{F}=(Q,\Sigma,\delta^\mathrm{F})$ such that the states are fixed instead of having an undefined transition:
for every $q \in Q$ and $a \in \Sigma$, we have $\delta^\mathrm{F}(q,a) = q$ if $\delta(q,a)=\bot$, and $\delta^\mathrm{F}(q,a)=\delta(q,a)$ otherwise.
\end{definition}

We list some useful properties of the fixing automaton.

\begin{lemma}\label{lem:fixing_same_action}
Let $\mathrsfs{A}=(Q,\Sigma,\delta)$ be a partial DFA, let $S \subseteq Q$, and let $w \in \Sigma^*$.
We have $\delta(S,w) \subseteq \delta^\mathrm{F}(S,w)$.
Moreover, if for every state $q \in S$ we have $\delta(q,w) \neq \bot$, then $\delta(S,w)=\delta^\mathrm{F}(S,w)$.
\end{lemma}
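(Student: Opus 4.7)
The plan is to prove both statements simultaneously by induction on $|w|$. The base case $w = \varepsilon$ is immediate, since $\delta(S,\varepsilon) = S = \delta^\mathrm{F}(S,\varepsilon)$. For the inductive step, I would write $w = w'a$ with $a \in \Sigma$, and rely on the single observation from the definition of $\delta^\mathrm{F}$: whenever $\delta(p,a) \neq \bot$ we have $\delta^\mathrm{F}(p,a) = \delta(p,a)$, and the two functions can only disagree at a pair $(p,a)$ with $\delta(p,a) = \bot$, where $\delta^\mathrm{F}(p,a) = p$.

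For the inclusion $\delta(S,w) \subseteq \delta^\mathrm{F}(S,w)$, take any $r \in \delta(S,w)$, so $r = \delta(q,w)$ for some $q \in S$ with $\delta(q,w) \neq \bot$. Setting $p = \delta(q,w') \in Q$ (which must be defined, otherwise $\delta(q,w) = \bot$), the inductive hypothesis gives $p \in \delta(S,w') \subseteq \delta^\mathrm{F}(S,w')$. Since $\delta(p,a) = r \neq \bot$, we have $\delta^\mathrm{F}(p,a) = \delta(p,a) = r$, so $r \in \delta^\mathrm{F}(S,w)$.

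For the equality under the stronger hypothesis that $\delta(q,w) \neq \bot$ for every $q \in S$, first note that this hypothesis is inherited by $w'$: if $\delta(q,w') = \bot$, then $\delta(q,w) = \bot$. So by induction $\delta(S,w') = \delta^\mathrm{F}(S,w')$. Now every element $p$ of this common set is of the form $p = \delta(q,w')$ for some $q \in S$, and by assumption $\delta(p,a) = \delta(q,w) \in Q$; hence $\delta^\mathrm{F}(p,a) = \delta(p,a)$. Applying this equality of transition functions to each element of the common set $\delta(S,w') = \delta^\mathrm{F}(S,w')$ yields $\delta^\mathrm{F}(S,w) = \delta(S,w)$, which together with the inclusion above completes the proof.

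The statement and proof are essentially bookkeeping, so I would not expect any real obstacle. The only mild care needed is to observe, when handling the equality case, that the assumption ``$\delta(q,w) \neq \bot$ for all $q \in S$'' automatically propagates to every prefix of $w$, which is what lets the induction go through cleanly.
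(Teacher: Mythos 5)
Your proof is correct: the induction on $|w|$ is sound, the key observation that $\delta^\mathrm{F}$ agrees with $\delta$ wherever $\delta$ is defined is exactly what is needed, and your remark that the hypothesis ``$\delta(q,w)\neq\bot$ for all $q\in S$'' propagates to prefixes closes the only potential gap. The paper states this lemma without proof, treating it as immediate from the definition of the fixing automaton, and your argument is precisely the routine verification being left implicit, so there is nothing to reconcile.
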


\begin{lemma}\label{lem:fixing_word}
Let $\mathrsfs{A}=(Q,\Sigma,\delta)$ be a partial DFA and let $S \subseteq Q$ be a non-empty subset.
For every word $w \in \Sigma^*$, there exists a word $w' \in \Sigma^*$ of length $|w'| \le |w|$ such that $\emptyset \neq \delta(S,w') \subseteq \delta^\mathrm{F}(S,w)$.
In particular, if $w$ has rank $r$ in $\mathrsfs{A}^\mathrm{F}$, then $w'$ has rank $1 \le r' \le r$ in $\mathrsfs{A}$.
\end{lemma}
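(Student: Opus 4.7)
The plan is to proceed by induction on $|w|$, building $w'$ by reading the letters of $w$ one at a time and deciding, at each step, whether to append the current letter to the current candidate or to skip it. The base case $w = \varepsilon$ is immediate with $w' = \varepsilon$.

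For the inductive step, write $w = v a$ with $a \in \Sigma$. By the induction hypothesis applied to $v$, there is a word $v' \in \Sigma^*$ with $|v'| \le |v|$ and $\emptyset \neq \delta(S, v') \subseteq \delta^\mathrm{F}(S, v)$. I would then split into two cases according to whether every state of $\delta(S, v')$ is killed by $a$ in $\mathrsfs{A}$.

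\textbf{Case 1.} There exists some $q \in \delta(S, v')$ with $\delta(q, a) \neq \bot$. Then set $w' = v' a$. By Lemma~\ref{lem:fixing_same_action}, for every such $q$ we have $\delta(q,a) = \delta^\mathrm{F}(q,a)$, and since $q \in \delta(S, v') \subseteq \delta^\mathrm{F}(S, v)$, this image lies in $\delta^\mathrm{F}(\delta^\mathrm{F}(S, v), a) = \delta^\mathrm{F}(S, w)$. Hence $\emptyset \neq \delta(S, w') \subseteq \delta^\mathrm{F}(S, w)$ and $|w'| \le |v|+1 = |w|$.

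\textbf{Case 2.} For every $q \in \delta(S, v')$ we have $\delta(q,a) = \bot$. Then keep $w' = v'$. The set $\delta(S, v')$ is non-empty by induction, and for any such $q$, by definition of $\delta^\mathrm{F}$ we have $\delta^\mathrm{F}(q,a) = q$. Therefore $q \in \delta^\mathrm{F}(\delta^\mathrm{F}(S,v), a) = \delta^\mathrm{F}(S, w)$, so $\delta(S, w') \subseteq \delta^\mathrm{F}(S, w)$ as required, with $|w'| \le |v| \le |w|$.

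The ``in particular'' clause follows by taking $S = Q$: then $\delta(Q, w')$ is a non-empty subset of $\delta^\mathrm{F}(Q, w)$, so its size is between $1$ and $r$. There is no real obstacle here; the only subtlety is remembering that in Case~2 the fixed states genuinely remain in $\delta^\mathrm{F}(S, w)$, which is exactly what the definition of the fixing automaton ensures.
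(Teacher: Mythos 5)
Your proposal is correct and follows essentially the same argument as the paper: induction on $|w|$, with the same case split on whether the letter $a$ is mortal for the current image $\delta(S,v')$, appending $a$ when some state survives and skipping it (using that $\delta^\mathrm{F}$ fixes the killed states) otherwise. The concluding observation for the rank statement also matches the paper's.
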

\begin{proof}
For a letter $a \in \Sigma$, let $\delta^{-1}(\bot,a) = \{q \in Q \mid \delta(q,a)=\bot\}$, which is the set of states that are mapped to $\bot$ under the action of $a$ in $\mathrsfs{A}$.

We prove the statement by induction on the length $|w|$.
Obviously, it holds for $|w|=0$. Consider $w = ua$ for some $u \in \Sigma^*$ and $a \in \Sigma$, and let $u'$ be the word obtained from the inductive assumption for~$u$.
Recall that all states from $\delta^{-1}(\bot,a)$ are fixed in $\mathrsfs{A}^\mathrm{F}$ under the action of $a$.
Let $T = \delta(S,u')$; thus $\emptyset \neq T \subseteq \delta^\mathrm{F}(S,u)$ by the inductive assumption.
We have two cases.

\textit{(Case~1)} If $T \subseteq \delta^{-1}(\bot,a)$, then we let $w' = u'$.
Hence, we still have $\delta(S,w') = \delta(S,u') = T \subseteq \delta^\mathrm{F}(S,ua)$, because $T$ is fixed under the action of $a$ by $\delta^\mathrm{F}$.

\textit{(Case~2)} If $T \nsubseteq \delta^{-1}(\bot,a)$, then we let $w' = u'a$.
Since there is a state $q \in T \setminus \delta^{-1}(\bot,a)$, we know that $\delta(T,a)$ is non-empty.
We also have $\delta(T,a) \subseteq \delta^\mathrm{F}(T,a)$, since the states in $T \setminus \delta^{-1}(\bot,a)$ are mapped in the same way under the action of $a$ by both $\delta$ and $\delta^\mathrm{F}$.
We get that $\delta(S,u'a) = \delta(T,a) \subseteq \delta^\mathrm{F}(T,a) \subseteq \delta^\mathrm{F}(S,ua)$.
\end{proof}

\begin{corollary}
The minimal non-zero rank of a partial DFA $\mathrsfs{A}$ is at most the minimal rank of $\mathrsfs{A}^\mathrm{F}$.
\end{corollary}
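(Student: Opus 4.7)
The plan is to obtain this corollary as a direct consequence of Lemma~\ref{lem:fixing_word} applied to the full state set $S = Q$. Let $r$ denote the minimal rank attained in the complete automaton $\mathrsfs{A}^{\mathrm{F}}$, and fix a word $w \in \Sigma^*$ realizing this minimum, so that $|\delta^{\mathrm{F}}(Q, w)| = r$.

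I would then invoke Lemma~\ref{lem:fixing_word} with $S = Q$ and this $w$: it supplies a word $w'$ such that $\emptyset \neq \delta(Q, w') \subseteq \delta^{\mathrm{F}}(Q, w)$. Since $\delta(Q, w')$ is non-empty, the rank $|\delta(Q, w')|$ of $w'$ in $\mathrsfs{A}$ is a non-zero integer, and the inclusion gives $|\delta(Q, w')| \le |\delta^{\mathrm{F}}(Q, w)| = r$. Hence the minimal non-zero rank of $\mathrsfs{A}$ is at most $r$, which is exactly the claim.

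There is no real obstacle here; the whole content is the translation from ``$\delta(S,w') \subseteq \delta^{\mathrm{F}}(S,w)$'' in Lemma~\ref{lem:fixing_word} to a statement about ranks, together with the observation that $w'$ is non-mortal in $\mathrsfs{A}$ because $\delta(Q, w')$ is guaranteed to be non-empty. The only minor subtlety worth a sentence is that one must specialize the lemma to $S = Q$ to get a statement about the (global) rank rather than the image of an arbitrary subset; everything else is immediate.
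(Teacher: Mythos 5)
Your proof is correct and is exactly the argument the paper intends: the corollary is an immediate consequence of Lemma~\ref{lem:fixing_word} (whose ``in particular'' clause already states that a rank-$r$ word in $\mathrsfs{A}^\mathrm{F}$ yields a word of non-zero rank at most $r$ in $\mathrsfs{A}$), specialized to $S=Q$ and a minimal-rank word of $\mathrsfs{A}^\mathrm{F}$.
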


In the case of a partial DFA that is not strongly connected, it can happen that we cannot compress some subset~$S$ even when there exists a word of non-zero rank smaller than $|S|$. 
This is the reason why the shortest words of the minimal non-zero rank can be exponentially long and why deciding if there is a word of a given rank is \PSPACE-complete~\cite{Berlinkov2014OnTwoAlgorithmicProblems}. 
However, in the case of a strongly connected partial DFA, as well as for a (not necessarily strongly connected) complete DFA, every non-mortal word can be extended to a word of the minimal non-zero rank.
This is a fundamental difference that allows constructing compressing words iteratively.
Note that the fixing automaton of a strongly connected partial DFA is also strongly connected.

\begin{lemma}\label{lem:strcon_minimal_rank}
Let $\mathrsfs{A}=(Q,\Sigma,\delta)$ be a strongly connected partial DFA, and let $r$ be the minimal non-zero rank over all words.
For every non-empty subset $S \subseteq Q$, there exists a non-mortal word $w$ such that $|\delta(S,w)| \le r$.
\end{lemma}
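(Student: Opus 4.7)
The plan is to construct $w$ as a concatenation $w = vu$, where $u$ is any fixed word witnessing the minimal non-zero rank $r$ and $v$ is a short routing word supplied by strong connectivity. The point of this decomposition is that the suffix $u$ automatically enforces the rank bound via $\delta(S,w) \subseteq \delta(Q,u)$, while the prefix $v$ is used solely to guarantee that the image of $S$ under $w$ is non-empty (which in particular makes $w$ non-mortal).

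Concretely, I would first fix $u \in \Sigma^*$ with $|\delta(Q,u)| = r$ and set $T = \delta(Q,u)$ and $P = \{p \in Q : \delta(p,u) \neq \bot\}$. Both sets are non-empty because $r \ge 1$; in fact $|P| \ge |T| = r$, since each state of $T$ has at least one preimage in $P$. Next, I pick any $s \in S$, and use strong connectivity to choose a word $v$ with $\delta(s,v) \in P$. Setting $w := vu$, I obtain $\delta(s,w) = \delta(\delta(s,v),u) \neq \bot$, so $\delta(S,w)$ is non-empty and hence $w$ is non-mortal. For the size bound, $\delta(S,w) \subseteq \delta(Q,w) = \delta(\delta(Q,v),u) \subseteq \delta(Q,u) = T$, which gives $|\delta(S,w)| \le |T| = r$.

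I do not anticipate any real obstacle. The only conceptual point is noticing the decoupling: ``small image'' comes for free from containment in $T$, and ``non-empty image on $S$'' is exactly what forces the use of strong connectivity, since in a strongly connected partial automaton we can always steer a single representative of $S$ into the defined domain of a chosen word. The construction visibly breaks without strong connectivity (where a word of rank $r$ may annihilate a fixed subset $S$ with no way to re-enter its defined domain), which is consistent with the discussion preceding the statement on why this hypothesis is essential in the partial setting.
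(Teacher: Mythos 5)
Your proposal is correct and is essentially the paper's own argument: the paper likewise fixes a word $u$ of rank $r$, picks a state $q$ with $\delta(q,u)\neq\bot$, and uses strong connectivity to route a state of $S$ to $q$ before applying $u$, concluding exactly as you do that $\delta(S,vu)$ is non-empty and contained in $\delta(Q,u)$. No gaps; your explicit remark that $\delta(S,w)\subseteq\delta(Q,u)$ gives the size bound is just a slightly more spelled-out version of the paper's final sentence.
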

\begin{proof}
Let $u$ be a word of rank $r$.
Then there exists a state $q \in Q$ such that $\delta(q,u)$ is defined ($\neq \bot$).
Let $p \in S$ be any state and let $v_{p,q}$ be a word mapping $p$ to $q$ (such a word always exists because of the strong connectivity).
Then $\delta(S,v_{p,q} u)$ has a non-zero rank $\le r$.
\end{proof}

\subsection{Rank conjecture}

The \emph{rank conjecture} (sometimes called \emph{\v{C}ern\'{y}-Pin conjecture}) is a well-known generalization of the \v{C}ern\'{y} conjecture to non-synchronizing DFAs (e.g., \cite{Pin1983OnTwoCombinatorialProblems}).
The rank conjecture is a weaker version of the conjecture originally stated by Pin that was not restricted to the minimal rank and turned out to be false~\cite{Kari2001Counterexample}.
Some further results on the rank conjecture for strongly connected complete DFAs are provided in~\cite{Kari2019}.

\begin{conjecture}[The rank conjecture]
For an $n$-state complete DFA where $r$ is the minimal rank over all words, there exists a word of rank $r$ and of length at most $(n-r)^2$.
\end{conjecture}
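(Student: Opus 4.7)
The statement is the (weaker form of the) well-known \v{C}ern\'{y}--Pin rank conjecture, open for over half a century, so my plan is the standard \emph{extension method} together with an honest assessment of where it stalls.

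Let $r$ be the minimum rank over all words of $\mathrsfs{A}$, and fix some word $u_0$ of rank $r$. I would build the desired word incrementally, producing a chain $S_0 = Q \supseteq S_1 \supseteq \cdots$ where $S_i = \delta(Q, w_1 w_2 \cdots w_i)$. At step $i$ with $|S_i| > r$, I would search for a short compressing word $w_{i+1}$, i.e.\ one with $|\delta(S_i, w_{i+1})| < |S_i|$; existence is immediate because $\delta(S_i, v u_0)$ has size at most $r$ for an appropriate connector $v \in \Sigma^*$, and some prefix therefore strictly reduces the size. The classical Frankl--Pin shortest-path argument, applied to the pair graph of $\mathrsfs{A}$ restricted to pairs inside $S_i$, gives the per-step bound $|w_{i+1}| \leq \binom{n - |S_i| + 2}{2}$.

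Summing as $|S_i|$ runs from $n$ down to $r+1$ yields a cubic bound of order $(n-r)^3/6$, matching the best unconditional upper bound known in the $r=1$ case. To reach the conjectured quadratic $(n-r)^2$, one needs an \emph{amortized} average per-step cost of order $n-r$ rather than the worst-case $(n-r)^2/2$ above; producing this amortization is the longstanding crux of the conjecture.

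The main obstacle is therefore this amortization, and I have no realistic plan for breaking it in full generality. Partial successes exist only for highly structured subclasses—Dubuc's circular automata, Kari's Eulerian automata via an eigenvector method, and a few others. A generic attempt would combine the extension method with a linear-algebraic potential function (for example, a variance-type quantity under the stationary distribution of a suitably weighted transition operator), hoping each compression step reduces the potential by a controlled amount; I do not expect such an approach to close the gap. Consistent with the paper's philosophy, the productive move is not to attack the conjecture head-on but to \emph{reduce} other synchronization problems to it—which is exactly what Theorem~\ref{thm:rank_conjecture_transfer} will then do for partial DFAs.
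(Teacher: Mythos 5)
You have correctly identified that this statement is not a theorem of the paper at all: it is the open \v{C}ern\'{y}--Pin rank conjecture, which the paper states only as a hypothesis and never proves. The paper's actual contribution here is exactly the reduction you anticipate --- Theorem~\ref{thm:rank_conjecture_transfer} shows that the conjecture for the (complete) fixing automaton implies it for a strongly connected partial automaton, and the only unconditional consequence drawn is Corollary~\ref{cor:eulerian}, via Kari's result for Eulerian automata, in line with the partial successes you list. Your sketch of the extension method is also an accurate account of the state of the art: the per-step Frankl--Pin bound $\binom{n-|S_i|+2}{2}$ (valid here because in a complete automaton any proper compression of $S_i$ must merge a pair of its states, and a word of rank $r$ always supplies such a compression when $|S_i|>r$) sums to roughly $(n-r)^3/6$, and the missing amortization down to $(n-r)^2$ is precisely the open crux. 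So there is no gap to report relative to the paper --- neither you nor the paper proves the conjecture, and your assessment of why it resists proof, and of how it is used, matches the paper's treatment.
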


For partial DFAs, the rank conjecture is analogous with the exception that $r$ is the minimal non-zero rank.

\begin{theorem}\label{thm:rank_conjecture_transfer}
Let $\mathrsfs{A}=(Q,\Sigma,\delta)$ be a strongly connected partial DFA.
If the rank conjecture holds true for the fixing automaton $\mathrsfs{A}^\mathrm{F}$, then it also holds for $\mathrsfs{A}$.
\end{theorem}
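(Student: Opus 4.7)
The plan is to apply the assumed rank conjecture to the complete automaton $\mathrsfs{A}^\mathrm{F}$, lift the resulting word back to $\mathrsfs{A}$ via Lemma~\ref{lem:fixing_word}, and then close any remaining gap by iterating Lemma~\ref{lem:voiding_lemma}. Let $r^\mathrm{F}$ denote the minimum rank of $\mathrsfs{A}^\mathrm{F}$. By the corollary stated right after Lemma~\ref{lem:fixing_word}, $r \le r^\mathrm{F}$, so in particular $(n-r^\mathrm{F})^2 \le (n-r)^2$. Because fixing only adds self-loops, $\mathrsfs{A}^\mathrm{F}$ is strongly connected, and the rank conjecture there supplies a word $w$ of length at most $(n-r^\mathrm{F})^2$ with $|\delta^\mathrm{F}(Q,w)| = r^\mathrm{F}$. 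Lemma~\ref{lem:fixing_word} then converts $w$ into a non-mortal word $w_0$ in $\mathrsfs{A}$ with $|w_0| \le |w|$ and $|\delta(Q,w_0)| \le r^\mathrm{F}$.

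If $|\delta(Q,w_0)|$ already equals $r$, we are done, since $|w_0| \le (n-r^\mathrm{F})^2 \le (n-r)^2$. Otherwise, setting $S_0 := \delta(Q,w_0)$, I would further compress iteratively. While $\kappa(S_i) \ge 2$ and $|S_i| > r$, Lemma~\ref{lem:voiding_lemma} yields a word $v_i$ of length at most $n-|S_i|+1$ such that $\kappa(\delta(S_i,v_i)) < \kappa(S_i)$; the key additional observation is that shedding an inseparability class from the image voids every state of $S_i$ lying in that class, so $|S_{i+1}| \le |S_i| - 1$ as well. Concatenating all $v_i$ to $w_0$, the accumulated length is bounded by
\[(n-r^\mathrm{F})^2 \;+\; \sum_{k=r+1}^{r^\mathrm{F}}(n-k+1),\]
and a direct calculation (writing $N := n-r$, $M := n-r^\mathrm{F}$ and comparing with $N^2$) shows this is at most $(n-r)^2$.

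The main obstacle I anticipate is the possibility that the iteration stalls at some $S_i$ with $\kappa(S_i)=1$ and $|S_i|>r$, since then Lemma~\ref{lem:voiding_lemma} does not apply. In this case $S_i$ is contained in a single inseparability class, so by Lemma~\ref{lem:fixing_same_action} the actions of $\mathrsfs{A}$ and $\mathrsfs{A}^\mathrm{F}$ on $S_i$ agree on every non-mortal continuation. The remedy is to exploit strong connectivity: either prepend a short word that sends $S_i$ onto a subset meeting two inseparability classes (restoring the hypothesis of Lemma~\ref{lem:voiding_lemma}), or, equivalently, import pair compression inside the class directly from the rank-conjecture word for $\mathrsfs{A}^\mathrm{F}$ by using the coincidence of actions. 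Ensuring that this additional rerouting fits inside the $(n-r)^2$ budget is the delicate part of the proof and where most of the technical care is needed.
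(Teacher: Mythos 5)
Your skeleton is the same as the paper's: apply the conjecture to the complete automaton $\mathrsfs{A}^\mathrm{F}$, pull the word back into $\mathrsfs{A}$ with Lemma~\ref{lem:fixing_word}, and then reduce the rank step by step with Lemma~\ref{lem:voiding_lemma}, paying at most $n-|S_i|+1$ per step; your length accounting $(n-r^\mathrm{F})^2+\sum_{k=r+1}^{r^\mathrm{F}}(n-k+1)\le(n-r)^2$ matches the paper's incremental bound $(n-s)^2+(n-s+1)\le(n-(s-1))^2$ and is correct. The genuine gap is exactly the case you name and then leave open: $\kappa(S_i)=1$ while $|S_i|>r$. Neither of your proposed remedies works as stated. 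There is no guarantee of a short (or any) word sending a set contained in one inseparability class onto a set meeting two classes, and your length budget has no slack left to pay for such rerouting; so the first remedy cannot be made to fit, and the second is only a gesture. As written, the proof is incomplete at its central step.

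The resolution is that this stalling configuration simply cannot occur, so no extra length is needed at all. Your iteration silently maintains the invariant $S_i\subseteq\delta^\mathrm{F}(Q,wv)$, where $w$ is the minimum-rank word of $\mathrsfs{A}^\mathrm{F}$ and $v$ is the accumulated suffix: this follows from $\delta(Q,w_0)\subseteq\delta^\mathrm{F}(Q,w)$ together with Lemma~\ref{lem:fixing_same_action}, and $\delta^\mathrm{F}(Q,wv)$ still has the minimal rank $r^\mathrm{F}$ because $\mathrsfs{A}^\mathrm{F}$ is complete. Now suppose $\kappa(S_i)=1$ and $|S_i|>r$. By Lemma~\ref{lem:strcon_minimal_rank} (strong connectivity) $S_i$ is compressible by some non-mortal word $u$; since all states of $S_i$ are pairwise inseparable, $u$ is defined on all of them, so the compression must be a pair compression, i.e.\ $\delta(p,u)=\delta(q,u)\neq\bot$ for two distinct $p,q\in S_i$. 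By Lemma~\ref{lem:fixing_same_action} the same identification happens in $\mathrsfs{A}^\mathrm{F}$, so $u$ would compress $\delta^\mathrm{F}(Q,wv)$ below $r^\mathrm{F}$ --- contradicting minimality. Hence $\kappa(S_i)\ge 2$ whenever $|S_i|>r$, Lemma~\ref{lem:voiding_lemma} always applies, and the budget you computed already suffices; supplying this contradiction argument (and noting that the word from Lemma~\ref{lem:voiding_lemma} sends at least one but not all states of $S_i$ to $\bot$, which is what actually forces $|S_{i+1}|<|S_i|$, rather than a whole class being ``shed'') is what your write-up is missing.
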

\begin{proof}
Let $r$ be the minimal rank in $\mathrsfs{A}^\mathrm{F}$ over all words.
From the conjecture and by \Cref{lem:fixing_word}, there exists a word $w'$ of length at most $(n-r)^2$ and such that $\emptyset \neq \delta(Q,w') \subseteq \delta^\mathrm{F}(Q,w)$.

Let $r' \leq r$ be the minimal non-zero rank in $\mathrsfs{A}$ over all words.
For every $s=r,r-1,\ldots,r'+1$, we inductively construct a word of non-zero rank less than $s$, of length at most $(n-(s-1))^2$, and such that~$w'$ is its prefix.
Let $w'v$ be a word of non-zero rank at most $s$ in $\mathrsfs{A}$ and of length at most~$(n-s)^2$, and let $S=\delta(Q,w'v)$.
Suppose that $\kappa(S)=1$.
Since $s$ is not the minimal rank of $\mathrsfs{A}$, by \Cref{lem:strcon_minimal_rank}, $S$ must be compressible.
Since its states are inseparable, there must be two distinct states $p,q \in S$ and a word $u$ such that $\delta(q,u)=\delta(p,u) \neq \bot$.
But (from \Cref{lem:fixing_same_action}) $\{p,q\} \subseteq \delta(Q,w'v) \subseteq \delta^\mathrm{F}(Q,w'v) \subseteq \delta^\mathrm{F}(Q,wv)$, thus $\delta^\mathrm{F}(Q,wv)$ is compressible in $\mathrsfs{A}^\mathrm{F}$, which contradicts the fact that $w$ has the minimal rank in $\mathrsfs{A}^\mathrm{F}$.
Hence $\kappa(S) \ge 2$, and by \Cref{lem:voiding_lemma}, $\delta(Q,w'v)$ can be compressed with a word $u$ of length at most $n-s+1$.
We have $|w'vu| \le (n-s)^2 + n-s+1 \le (n-(s-1))^2$, which proves the induction step.
\end{proof}

The theorem implies that, in the strongly connected case, the rank conjecture is true for complete DFAs if and only if it is true for partial DFAs.
It also immediately implies the following result for the class of Eulerian DFAs.
A partial DFA is \emph{Eulerian} if it is strongly connected and the numbers of outgoing and incoming transitions are the same at every state, i.e., for every $q \in Q$, we have $|\{a \in \Sigma \mid \delta(q,a) \ne \bot\}| = |\{(p,a) \in Q \times \Sigma \mid \delta(p,a)=q\}|$.
The following corollary follows from the facts that the rank conjecture holds for complete Eulerian DFAs \cite{Kari2019} and that the fixing automaton of a partial Eulerian DFA is also Eulerian.

\begin{corollary}\label{cor:eulerian}
The rank conjecture is true for partial Eulerian DFAs.
\end{corollary}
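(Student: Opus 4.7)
The plan is to derive the statement as an immediate application of Theorem~\ref{thm:rank_conjecture_transfer} combined with Kari's theorem~\cite{Kari2019}, which establishes the rank conjecture for every complete Eulerian automaton. Since Theorem~\ref{thm:rank_conjecture_transfer} assumes strong connectivity (which is part of the definition of Eulerian here), and reduces the conjecture for a partial $\mathrsfs{A}$ to the conjecture for its fixing automaton $\mathrsfs{A}^\mathrm{F}$, the whole task is to verify that, when $\mathrsfs{A}$ is partial Eulerian, $\mathrsfs{A}^\mathrm{F}$ is a complete Eulerian automaton.

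First I would observe that $\mathrsfs{A}^\mathrm{F}$ is strongly connected: its transition graph contains the whole transition graph of $\mathrsfs{A}$ and only adds self-loops in place of the undefined transitions, so every path of $\mathrsfs{A}$ remains a path of $\mathrsfs{A}^\mathrm{F}$. Completeness of $\mathrsfs{A}^\mathrm{F}$ holds by construction.

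Next I would verify the in/out-degree balance for $\mathrsfs{A}^\mathrm{F}$. Since $\mathrsfs{A}^\mathrm{F}$ is complete, every state has out-degree exactly $|\Sigma|$, so it suffices to show that the in-degree of every state equals $|\Sigma|$. Fix $q \in Q$ and set $d_{\mathrm{out}}(q) = |\{a \in \Sigma \mid \delta(q,a) \in Q\}|$ and $d_{\mathrm{in}}(q) = |\{(p,a) \in Q \times \Sigma \mid \delta(p,a) = q\}|$; these are equal by the Eulerian property of $\mathrsfs{A}$. Passing from $\mathrsfs{A}$ to $\mathrsfs{A}^\mathrm{F}$ adds exactly one self-loop at $q$ for each letter $a$ with $\delta(q,a)=\bot$, contributing $|\Sigma| - d_{\mathrm{out}}(q)$ new incoming transitions at $q$. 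Hence the in-degree of $q$ in $\mathrsfs{A}^\mathrm{F}$ equals $d_{\mathrm{in}}(q) + |\Sigma| - d_{\mathrm{out}}(q) = |\Sigma|$, so $\mathrsfs{A}^\mathrm{F}$ is Eulerian.

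There is no real obstacle in this proof: all the content is carried by Theorem~\ref{thm:rank_conjecture_transfer} and the result of~\cite{Kari2019}. The only point requiring any care is the observation that the self-loops introduced by the fixing construction at each state $q$ precisely compensate for the transitions missing at $q$ in the partial automaton, which is immediate from the definition of $\delta^\mathrm{F}$.
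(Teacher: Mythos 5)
Your proof is correct and follows exactly the paper's route: apply Theorem~\ref{thm:rank_conjecture_transfer} together with the rank conjecture for complete Eulerian automata \cite{Kari2019}, after checking that the fixing automaton of a partial Eulerian automaton is itself Eulerian. Your degree-counting verification (the added self-loops at $q$ contribute exactly $|\Sigma| - d_{\mathrm{out}}(q)$ incoming transitions) just spells out the fact the paper states without proof.
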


\subsection{Collecting automaton}

The fixing automaton allows us to analyze the behavior of words in a partial DFA by studying a complete DFA. However, its main disadvantage is that the fixing automaton of a synchronizing partial DFA is not necessarily synchronizing.
Therefore, we will need one more tool, called \emph{collecting automaton}.
It is an extension of the fixing automaton by an additional letter that allows quick synchronization into one inseparability class, while it does not affect the length of a shortest synchronizing word for any particular inseparability class.

By $\mathrsfs{A}/_{\botequiv} = (Q_{\mathrsfs{A}/_{\botequiv}},\Sigma,\delta_{\mathrsfs{A}/_{\botequiv}})$, we denote the quotient DFA by the inseparability relation (clearly, this relation is a congruence).
$\mathrsfs{A}/_{\botequiv}$ is also a partial DFA, and if $\mathrsfs{A}$ is strongly connected, then so is $\mathrsfs{A}/_{\botequiv}$.
By $[p] \in Q_{\mathrsfs{A}/_{\botequiv}}$, we denote the equivalence class of a state $p \in Q$ of the original DFA $\mathrsfs{A}$.

A \emph{collecting tree} of $\mathrsfs{A}$ is a tree $T$ with the set of vertices $Q_{\mathrsfs{A}/_{\botequiv}}$ and directed edges labeled by letters from $\Sigma$ in the following way:
\begin{enumerate}
\item[(a)] edges and their labels correspond to transitions in $\mathrsfs{A}/_{\botequiv}$:\\each edge $([p],a,[p'])$ is such that ${\delta_{\mathrsfs{A}/_{\botequiv}}([p],a) = [p']}$;
\item[(b)] there is a root $[r]$ such that the tree is directed toward it.
\end{enumerate}
See \Cref{fig:collecting} for an example.
Equivalently, a collecting tree can be seen as a specific partial DFA which is a sub-automaton of $\mathrsfs{A}/_{\botequiv}$ whose underlying digraph is a tree directed toward one state.
A DFA can have many collecting trees, even for the same $[r]$, and every strongly connected DFA has a collecting tree for every class $[r]$.

\begin{figure}[htb]\centering
\raisebox{-0.5\height}{\includegraphics[scale=1.2]{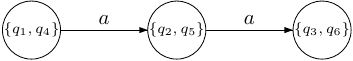}}\hspace{2cm}
\raisebox{-0.5\height}{\includegraphics{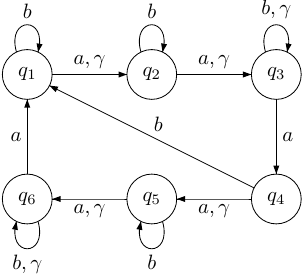}}
\caption{\normalsize Left: a collecting tree with root $[q3]=\{q_3,q_6\}$; right: the corresponding collecting automaton of the example from \Cref{fig:examples} (left).}
\label{fig:collecting}
\end{figure}

\begin{definition}[Collecting automaton]
Let $\mathrsfs{A}=(Q,\Sigma,\delta)$ be a strongly connected partial DFA, and let~$T$ be one of its collecting trees with a root $[r]$.
The \emph{collecting automaton} $\mathrsfs{A}^{\mathrm{C}(T)}=(Q,\Sigma \cup \{\gamma\}, \delta^{\mathrm{C}(T)})$ is defined as follows:
\begin{itemize}
\item The transition function $\delta^{\mathrm{C}(T)}$ on $\Sigma$ is defined as in the fixing automaton $\mathrsfs{A}^\mathrm{F}$.
\item $\gamma \notin \Sigma$ is a fresh letter. Its action is defined according to the edges in $T$:
Let $q_1 \in Q \setminus [r]$ be a state.
Since $T$ is a tree directed toward $[r]$, there is exactly one edge outgoing from $[q_1]$, say $([q_1],a,[q_2]) \in T$ for some $[q_2] \in Q_{\mathrsfs{A}/_{\botequiv}}$ and $a \in \Sigma$.
We set $\delta^{\mathrm{C}(T)}(q_1,\gamma)=\delta(q_1,a)$.
Finally, the transition of $\gamma$ on each state in $[r]$ is a self-loop.
\end{itemize}
\end{definition}

A collecting automaton of a strongly connected partial DFA is always strongly connected, as it contains all transitions of the original partial DFA.
It is also always complete.
We prove several properties connecting partial DFAs and their collecting automata.
They are preliminary steps toward relating the \v{C}ern\'{y} conjecture for strongly connected partial and complete DFAs.

\begin{lemma}\label{lem:synchro_class_by_collecting}
Let $\mathrsfs{A}=(Q,\Sigma,\delta)$ be a strongly connected partial DFA, and let $T$ be one of its collecting trees with a root $[r]$.
If there is a word over $\Sigma \cup \{\gamma\}$ that synchronizes $[r]$ in $\mathrsfs{A}^{\mathrm{C}(T)}=(Q,\Sigma \cup \{\gamma\},\delta^{\mathrm{C}(T)})$, then there is also such a word over $\Sigma$ of at most the same length.
\end{lemma}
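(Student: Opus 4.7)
The approach is to eliminate the occurrences of $\gamma$ from a given word $w$ synchronizing $[r]$ in $\mathrsfs{A}^{\mathrm{C}(T)}$ one at a time, from left to right, without ever increasing the length of the word. I would proceed by induction on the number of occurrences of $\gamma$ in $w$; the base case (no $\gamma$'s) is immediate.

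The key structural observation is that $\botequiv$ is a congruence with respect to each letter $a \in \Sigma$ acting via $\delta^{\mathrm F}$. Indeed, if $p \botequiv q$, then the defining property of $\botequiv$ applied to the word $a$ forces $\delta(p,a)$ and $\delta(q,a)$ to be both defined or both undefined. In the former case, inseparability of their images is obtained by considering extensions $av$ in the definition of $\botequiv$; in the latter, $\delta^{\mathrm F}(p,a) = p$ and $\delta^{\mathrm F}(q,a) = q$ remain $\botequiv$-equivalent by assumption. Iterating along any word $u \in \Sigma^*$ yields the invariant that $\delta^{\mathrm{C}(T)}([r], u) = \delta^{\mathrm F}([r], u)$ is contained in a single inseparability class $[q]$.

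For the inductive step, write $w = u \gamma v$ with $u \in \Sigma^*$ the prefix up to the leftmost $\gamma$, and let $[q]$ be the class containing $\delta^{\mathrm{C}(T)}([r], u)$. If $[q] = [r]$, then $\gamma$ acts as the identity on this image, and I would simply delete this occurrence, producing a shorter word that still synchronizes $[r]$. Otherwise $[q] \neq [r]$, and $T$ contains a unique outgoing edge from $[q]$, say $([q], [q'], a)$. Since this is an edge in $\mathrsfs{A}/_\botequiv$, some state of $[q]$ has $a$ defined, whence by the congruence property every state of $[q]$ does, and $\delta^{\mathrm F}$ agrees with $\delta$ on these states under $a$. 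By the definition of the action of $\gamma$ in $\mathrsfs{A}^{\mathrm{C}(T)}$, the letters $\gamma$ and $a$ therefore act identically on every state of $[q]$, so replacing this $\gamma$ by $a$ yields a word of the same length whose image coincides with that of $w$ at every subsequent position. In both cases the number of $\gamma$'s decreases by one, and the induction closes.

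The main obstacle to guard against is an off-by-one confusion between $\delta$, $\delta^{\mathrm F}$, and $\delta^{\mathrm{C}(T)}$ on the various states, and verifying that the image really stays inside a single inseparability class at each occurrence of $\gamma$; once the single-class invariant is established, the replacement (or deletion) of each $\gamma$ is essentially forced and the rest is bookkeeping.
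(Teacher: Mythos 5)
Your proof is correct and follows essentially the same route as the paper's: replace each occurrence of $\gamma$ by the letter labeling the outgoing tree edge of the current class, or delete it when the image of $[r]$ already lies in $[r]$, which leaves the action on $[r]$ unchanged and never lengthens the word. The only difference is cosmetic — you process the $\gamma$'s left to right by induction and spell out the single-class invariant (that $\botequiv$ is respected by the $\delta^{\mathrm F}$-action of $\Sigma$), which the paper uses implicitly.
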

\begin{proof}
Let $w' \in (\Sigma \cup \{\gamma\})^*$ be a word that synchronizes $[r]$ in $\mathrsfs{A}^{\mathrm{C}(T)}$.
We construct $w \in \Sigma^*$ from~$w'$ such that~$w$ also synchronizes~$[r]$ and $|w| \le |w'|$.
This is done by replacing every occurrence of $\gamma$ in~$w'$ with a suitable substituting letter from $\Sigma$ or removing it.

Consider an occurrence of $\gamma$ in $w'$, so let $w' = u \gamma v$, for some $u,v \in (\Sigma \cup \{\gamma\})^*$.
Let $[p]$ be the class such that $\delta^{\mathrm{C}(T)}([r],u) \subseteq [p]$.
If $[p] \neq [r]$, then let $a \in \Sigma$ be the letter labeling the outgoing edge $([p],a,[p'])$ in the collecting tree.
The action of $a$ on $[p]$ is thus the same as the action of $\gamma$.
Hence, $\delta^{\mathrm{C}(T)}([r],u \gamma v) = \delta^{\mathrm{C}(T)}([r],u a v)$.
If $[p] = [r]$, then $\gamma$ acts as identity on $[p]$, thus $\delta^{\mathrm{C}(T)}([r],u \gamma v) = \delta^{\mathrm{C}(T)}([r],u v)$.

By considering the occurrences of $\gamma$ in any order and performing the described changes, we obtain a word $w$ such that $\delta^{\mathrm{C}(T)}([r],w') = \delta^{\mathrm{C}(T)}([r],w)$. Thus $w$ synchronizes $[r]$, since so does~$w'$.
\end{proof}

\begin{lemma}\label{lem:collecting_automaton_synchro}
Let $\mathrsfs{A}=(Q,\Sigma,\delta)$ be a strongly connected partial DFA and let $T$ be one of its collecting trees.
Then $\mathrsfs{A}$ is synchronizing if and only if the collecting automaton $\mathrsfs{A}^{\mathrm{C}(T)}=(Q,\Sigma \cup \{\gamma\},\delta^{\mathrm{C}(T)})$ is synchronizing.
\end{lemma}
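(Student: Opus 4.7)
I would prove both implications separately.

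For $\mathrsfs{A}$ synchronizing $\Rightarrow \mathrsfs{A}^{\mathrm{C}(T)}$ synchronizing, I start from a reset word of $\mathrsfs{A}$. By strong connectivity I append a word routing its target to $r$, and then prepend a word sending $r$ to some state that is itself mapped to $r$ by the current word; the net effect is a word $\hat{w} \in \Sigma^*$ with $\delta(Q, \hat{w}) = \{r\}$ and $\delta(r, \hat{w}) = r$. Inseparability then lifts definedness at $r$ to all of $[r]$, giving $\delta([r], \hat{w}) = \{r\}$, and by Lemma~\ref{lem:fixing_same_action} also $\delta^{\mathrm{C}(T)}([r], \hat{w}) = \{r\}$. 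Since $\gamma$ drives every state one edge of $T$ closer to $[r]$ and acts as identity on $[r]$, after at most $\kappa(Q) - 1$ applications (an upper bound on the height of $T$) we have $\delta^{\mathrm{C}(T)}(Q, \gamma^{\kappa(Q)-1}) \subseteq [r]$, so $\gamma^{\kappa(Q)-1}\hat{w}$ is a reset word of $\mathrsfs{A}^{\mathrm{C}(T)}$.

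For the converse, I take a reset word of $\mathrsfs{A}^{\mathrm{C}(T)}$, which in particular synchronizes $[r]$. Lemma~\ref{lem:synchro_class_by_collecting} provides $w' \in \Sigma^*$ such that $\delta^{\mathrm{C}(T)}([r], w') = \delta^{\mathrm{F}}([r], w')$ is a singleton, and then Lemma~\ref{lem:fixing_word} applied to $S = [r]$ yields $w'' \in \Sigma^*$ with $\emptyset \neq \delta([r], w'') \subseteq \delta^{\mathrm{F}}([r], w')$, hence $\delta([r], w'') = \{q'\}$ for some $q' \in Q$; by inseparability, $w''$ is defined on every state of $[r]$.

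The remaining step---turning this synchronization of $[r]$ inside $\mathrsfs{A}$ into a reset word for all of $Q$---is the crux. Corollary~\ref{cor:voiding_lemma} with $S = Q$ furnishes $w_0 \in \Sigma^*$ with $\emptyset \neq \delta(Q, w_0) \subseteq [p]$ for some class $[p]$, and strong connectivity of $\mathrsfs{A}/_{\botequiv}$ provides $v \in \Sigma^*$, defined on $[p]$, with $\delta([p], v) \subseteq [r]$. The composition $w_0 v w''$ then sends $Q$ first into $[p]$, then into $[r]$, and finally, since $w''$ is defined on all of $[r]$, into $\{q'\}$, which proves that $\mathrsfs{A}$ is synchronizing. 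I do not anticipate any genuine obstacle; each move is a direct application of a tool developed earlier in the section. The one thing requiring care is the interplay between $\mathrsfs{A}$, $\mathrsfs{A}^{\mathrm{F}}$, and $\mathrsfs{A}^{\mathrm{C}(T)}$, and the systematic use of the inseparability congruence to lift properties from a single state of $[r]$ to the entire class.
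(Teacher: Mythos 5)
Your proof is correct and follows essentially the same route as the paper's: the forward direction collects all states into $[r]$ with a power of $\gamma$ and then applies a $\Sigma$-word synchronizing $[r]$ (you arrange the reset word to fix $r$, while the paper instead routes $[r]$ to whichever class the original reset word synchronizes), and the converse combines Lemma~\ref{lem:synchro_class_by_collecting}, Corollary~\ref{cor:voiding_lemma} and strong connectivity exactly as in the paper. Your additional invocation of Lemma~\ref{lem:fixing_word} to ensure the $\Sigma$-word is actually non-mortal on $[r]$ in $\mathrsfs{A}$ is a careful touch that the paper's proof leaves implicit.
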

\begin{proof}
Let $[r]$ be the root class of $T$.
Let $w \in \Sigma^*$ be a reset word for $\mathrsfs{A}$.
Then there is some class $[p]$ such that $w$ synchronizes it.
By \Cref{lem:fixing_same_action}, $\delta([p],w)=\delta^{\mathrm{F}}([p],w)$ is a singleton.
Since $\mathrsfs{A}$ is strongly connected, there is a word $u_{r \to p}$ whose action maps all states from $[r]$ into $[p]$.
Then $\gamma^{n-1} u_{r \to p} w$ is a synchronizing word for $\mathrsfs{A}^{\mathrm{C}(T)}$.

Conversely, if $\mathrsfs{A}^{\mathrm{C}(T)}$ is synchronizing, then there exists a word synchronizing $[r]$.
By \Cref{lem:synchro_class_by_collecting}, there is also a word $w$ synchronizing $[r]$ in $\mathrsfs{A}^{\mathrm{C}(T)}$.
By \Cref{cor:voiding_lemma}, we can find a word $v$ whose action maps $Q$ into a single equivalence class $[p]$.
By strong connectivity, we can find a word $u_{p \to r}$ whose action maps all states from $[p]$ into $[r]$.
Thus, $v u_{p \to r} w$ is a reset word for $\mathrsfs{A}$.
\end{proof}

\Cref{lem:collecting_automaton_synchro} implies in particular that the choice of $T$ does not matter: $\mathrsfs{A}=(Q,\Sigma,\delta)$ is synchronizing if and only if all $\mathrsfs{A}^{\mathrm{C}(T)}$ are synchronizing.

\subsection{Algorithmic aspects}\label{sec:alg_issues}

Checking if a strongly connected partial DFA is synchronizing and finding a word of minimum non-zero rank can be done similarly as for a complete DFA, by a suitable generalization of the well-known Eppstein algorithm \cite[Algorithm~1]{Ep1990}.
The same algorithm for checking synchronizability, under different terminology, was described in the context of $\varepsilon$-machines~\cite{TrCr11Exact}.

\begin{proposition}\label{pro:synchronization_algorithm}
Checking if a given strongly connected partial DFA with $n$ states over an alphabet~$\Sigma$ is synchronizing can be done in $\O(|\Sigma|\cdot n^2)$ time and $\O(n^2+|\Sigma|\cdot n)$ space.
Finding a word of minimum non-zero rank can be done in $\O(|\Sigma|\cdot n^3)$ time and $\O(n^2+|\Sigma|\cdot n)$ space (not counting the word's space itself).
\end{proposition}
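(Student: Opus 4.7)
The plan is to reduce the partial case to the complete case via the collecting automaton, and then apply the classical Eppstein pair-BFS algorithm, with an additional iterative compression loop that combines pair compression and the voiding mechanism of Lemma~\ref{lem:voiding_lemma} for the minimum-rank task. The shared precomputation is to determine the inseparability equivalence $\botequiv$ by Moore-style partition refinement: start from the trivial partition $\{Q\}$ and iteratively split any block whose elements disagree under some letter on being defined or on which current block they move to. This stabilizes in at most $n-1$ rounds, each costing $\O(|\Sigma| n)$, for $\O(|\Sigma| n^2)$ total time and $\O(n^2 + |\Sigma| n)$ space, and simultaneously yields the quotient $\mathrsfs{A}/_{\botequiv}$ and a collecting tree $T$ (obtained by a single BFS in $\mathrsfs{A}/_{\botequiv}$ from any chosen root class).

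For checking synchronizability, I would build the collecting automaton $\mathrsfs{A}^{\mathrm{C}(T)}$ explicitly and invoke the classical Eppstein algorithm on it: compute a reverse BFS on the pair graph over $\binom{n}{2}$ pairs, propagating reachability toward the diagonal through each of the $|\Sigma|+1$ letters. By Lemma~\ref{lem:collecting_automaton_synchro} this correctly decides synchronizability of $\mathrsfs{A}$, and the runtime is $\O((|\Sigma|+1) n^2) = \O(|\Sigma| n^2)$ within $\O(n^2 + |\Sigma| n)$ space, matching the stated bounds.

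For finding a minimum-rank word, I would iteratively compress the current image set $S$, starting from $S := Q$. At each step, if $\kappa(S) \geq 2$, I apply Lemma~\ref{lem:voiding_lemma} to obtain a word $u$ reducing $\kappa(S)$ and set $S := \delta(S, u)$; otherwise $S$ lies in a single inseparability class $[p]$ and I run an Eppstein-style reverse BFS on the pair graph of $\mathrsfs{A}^\mathrm{F}$, restricted to pairs inside $S$, to find a compressing word and update $S$ accordingly. The loop halts once both steps fail. Since each iteration strictly decreases $|S|$, there are at most $n-1$ iterations; each costs $\O(|\Sigma| n^2)$, giving $\O(|\Sigma| n^3)$ time and $\O(n^2 + |\Sigma| n)$ space.

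The main obstacle is the correctness of the termination condition for the minimum-rank loop when $\kappa(S) = 1$. I would argue that pair compressibility of $S$ in $\mathrsfs{A}$ coincides with pair compressibility of $S$ in $\mathrsfs{A}^\mathrm{F}$ on the same set: two inseparable states are by definition simultaneously defined or simultaneously undefined under every letter, so any ``fixing'' transition of $\mathrsfs{A}^\mathrm{F}$ either fixes both states of an inseparable pair identically (contributing nothing and safely skippable) or coincides with the action of $\mathrsfs{A}$. Hence the BFS on the pair graph of $\mathrsfs{A}^\mathrm{F}$, restricted to pairs within $[p]$, faithfully captures all pair-compressions available in $\mathrsfs{A}$. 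Combined with Lemma~\ref{lem:strcon_minimal_rank} and strong connectivity, failure of both the voiding step and this in-class pair-BFS certifies that $|S|$ has reached the minimum non-zero rank, completing the correctness proof.
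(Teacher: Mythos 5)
Your decision procedure and your min-rank loop take a genuinely different route from the paper. The paper's proof does not use the inseparability classes, the collecting automaton, or the fixing automaton at all: it runs a single Eppstein-style reverse BFS directly on the pair graph of the \emph{partial} automaton, declaring a pair $\{p,q\}$ compressible when $|\delta(\{p,q\},a)|=1$ for some letter $a$ (this covers both merging and the death of exactly one of the two states), decides synchronizability by checking that every pair is compressible, and for the minimum rank greedily applies shortest pair-compressing words starting from $Q$, with Lemma~\ref{lem:strcon_minimal_rank} certifying that the process stops exactly at the minimal non-zero rank. Your synchronizability test (Moore refinement for $\botequiv$, a collecting tree, Lemma~\ref{lem:collecting_automaton_synchro}, then Eppstein on $\mathrsfs{A}^{\mathrm{C}(T)}$) is correct and is essentially the reduction of Theorem~\ref{thm:synchronizability_reduction} with Moore in place of Hopcroft; it buys verbatim reuse of the complete-automaton algorithm at the price of machinery the direct generalization does not need. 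Your min-rank decomposition (void via Lemma~\ref{lem:voiding_lemma} while $\kappa(S)\ge 2$, then merge pairs inside a single class, with termination certified by Lemma~\ref{lem:strcon_minimal_rank} and the fact that a subset contained in one class can only be compressed by a merge in which all its states survive) is a valid alternative to the paper's uniform greedy step.

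The one genuine problem is the restriction of the pair BFS in the min-rank phase. You run the reverse BFS on the pair graph of $\mathrsfs{A}^\mathrm{F}$ ``restricted to pairs inside $S$'' and later claim that the BFS ``restricted to pairs within $[p]$'' captures all pair-compressions available in $\mathrsfs{A}$. Taken literally, that claim is false: a shortest word merging a pair $\{p,q\}\subseteq[p]$ typically drives the pair through intermediate pairs lying in \emph{other} inseparability classes (and outside $S$) before the merge occurs; what is preserved along such a path is only that the two components remain inseparable (they stay in a common class), not that they stay in $[p]$ or in $S$. A BFS whose graph is cut down to pairs inside $S$ (or inside $[p]$) can miss existing merging words, so your loop could stop early and report a rank larger than the minimum. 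The repair keeps your architecture: run the reverse BFS from the diagonal over the \emph{whole} pair graph (of $\mathrsfs{A}^\mathrm{F}$, or equivalently of $\mathrsfs{A}$ itself, since on a pair of inseparable states the two transition functions agree whenever the pair survives, cf.\ Lemma~\ref{lem:fixing_same_action}), and only require the queried source pair to lie in $S$; your observation that definedness, hence inseparability, is preserved under defined transitions then yields exactly the equivalence you need: a pair inside one class is mergeable in $\mathrsfs{A}^\mathrm{F}$ if and only if it is compressible in $\mathrsfs{A}$. Two minor points: Lemma~\ref{lem:voiding_lemma} as stated only guarantees that $\kappa(S)$ strictly decreases, so bound the number of iterations by $(\kappa(Q)-1)+(n-1)=\O(n)$ via the pair of measures $(\kappa(S),|S|)$ rather than asserting that $|S|$ drops at every step; and you should say how the voiding word is found algorithmically (e.g., a BFS for a shortest word defined on one state of $S$ and undefined on another), which fits in the same $\O(|\Sigma|n^2)$ per-phase budget.
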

\begin{proof}
We modify the well-known Eppstein algorithm \cite[Algorithm~1]{Ep1990} by generalizing it to partial and not necessarily synchronizing automata.
We mark all pairs of states $\{p,q\}$ such that $|\delta(\{p,q\},a)|=1$ for some $a \in \Sigma$; note that this includes two possibilities of compression, which is the only difference from the standard algorithm.
Then we propagate the compressibility backwards using a breadth-first search:
For a compressible pair $\{p,q\}$, we mark all pairs $\{p',q'\}$ such that there exists a letter $a \in \Sigma$ with $\delta(p',a)=p$ and $\delta(q',a)=q$ (or dually $\delta(p',a)=q$ and $\delta(q’,a)=p$, since the pairs are unordered).
The partial DFA is synchronizing if and only if all pairs are compressible.
The time complexity of this is $\O(|\Sigma|\cdot n^2)$.

To find a word of minimum non-zero rank, we repetitively apply a shortest word that compresses a pair of states in the current subset, starting from $Q$.
This part works in the same way as in the Eppstein algorithm, including the trick to reduce the space complexity to quadratic.
Each time we apply a word that compresses a pair of states, and it is guaranteed that at least one of these states does not go to $\bot$.
The algorithm stops when the current subset is no longer compressible, and by \Cref{lem:strcon_minimal_rank}, its size is equal to the minimal non-zero rank.
\end{proof}

Furthermore, there exists an even more efficient reduction from the problem of checking if a strongly connected partial DFA is synchronizing to the problem of checking if a complete DFA is synchronizing. Hence, the existence of a more efficient algorithm for the latter would imply the existence of a more efficient algorithm for the former.

\begin{theorem}\label{thm:synchronizability_reduction}
Given a strongly connected partial DFA with $n$ states over an alphabet $\Sigma$, in $\O(|\Sigma|\cdot n\log n)$ time, we can construct a complete DFA that is synchronizing if and only if the given partial DFA is synchronizing.
\end{theorem}
\begin{proof}
We can compute all inseparability classes in $\O(|\Sigma|\cdot n \log n)$ time.
This is done by the Hopcroft minimization algorithm \cite{Hopcroft1971}, if we interpret the partial DFA as a language-accepting DFA with an arbitrary initial state and a new sink state $\bot$ that is its only final state.

Having computed the classes, we can construct a collecting automaton for an arbitrary collecting tree.
Note that it can be done in $\O(|\Sigma|\cdot n)$ time by a breadth-first search from a class $[r]$.
The desired property follows from~\Cref{lem:collecting_automaton_synchro}.
\end{proof}

\subsection{Induced automaton}\label{sec:induced}

We develop an algebraic technique applied to partial DFAs.
It will allow us to derive upper bounds on reset thresholds, in particular, in the cases when there exists a short word of a small rank, which is the case, for example, for the literal automaton of a prefix code. We also use the results of this subsection in the next subsection to translate upper bounds on the reset threshold from complete DFAs to partial DFAs.
We base on the results from~\cite{BS2016AlgebraicSynchronizationCriterion} for complete DFAs and generalize them to be applied to partial DFAs.
The existing linear algebraic proofs for complete DFAs do not work for partial ones, because the transition matrices may not have a constant sum of the entries in each row.
Furthermore, our generalization simplifies the previous proof, which, to show that an induced automaton is synchronizing, uses the stationary distribution of a Markov chain defined by the automaton and the extension method (applying words that yield a larger preimage of a subset).

We need to introduce a few definitions from linear algebra for automata (see, e.g.,~\cite{BS2016AlgebraicSynchronizationCriterion,Kari2003Eulerian,Pin1972Utilisation,Szykula2018ImprovingTheUpperBound}).
Let $\mathrsfs{A} = (Q,\Sigma,\delta)$ be a partial DFA.
Without loss of generality we assume that $Q = \{1,\ldots,n\}$.
By~$\mathbb{Q}^n$, we denote the rational\footnote{The method works over the real field $\mathbb{R}$ too, yet $\mathbb{Q}$ is sufficient.} $n$-dimensional linear space of row vectors.
For a vector $v \in \mathbb{Q}^n$ and an $i \in Q$, we denote the vector's value at the $i$-th position by $v(i)$.
Similarly, for a matrix $M$, we denote its value in the $i$-th row and the $j$-th column by $M(i,j)$.
A vector $g$ is \emph{non-negative} if $g(i)\ge 0$ for all $i$, and it is \emph{non-zero} if $g(i) \neq 0$ for some $i$.
For a word $w \in \Sigma^*$, by $\mathrm{M}(w)$ we denote the $n \times n$ matrix of the transformation of $w$ in $\delta$:
$\mathrm{M}(w)(p,q) = 1$ if $\delta(p,w)=q$, and $\mathrm{M}(w)(p,q)=0$ otherwise.
Note that if $\delta(p,w)=\bot$, then we have $\mathrm{M}(w)(p,q)=0$ for all $q \in Q$.
The usual scalar product of two vectors $u,v$ is denoted by $u \scalar v$.
The linear subspace spanned by a set of vectors $V$ is denoted by $\lspan(V)$.

Given a transition function $\delta$ (which defines matrices $\mathrm{M}(w)$), call a set of words $W \subseteq \Sigma^*$ \emph{complete for a subspace} $V \subseteq \mathbb{Q}^n$ \emph{with respect to a vector} $g \in V$, if $V \subseteq \lspan(\{g \mathrm{M}(w) \mid w \in W\})$.
A set of words $W \subseteq \Sigma^*$ is \emph{complete for a subspace} $V \subseteq \mathbb{Q}^n$ if for every non-negative non-zero vector $g \in V$, $W$ is complete for $V$ with respect to $g$.
Let $\chi(p)$ denote the characteristic (unitary) vector of $\{p\}$.
For a subset $S \subseteq Q$, we define $\mathbb{V}(S) = \lspan(\{\chi(p) \mid p \in S\}) \subseteq \mathbb{Q}^n$.

For example, consider the DFA from~\Cref{fig:examples}~(left).
Let $V = \mathbb{V}(\{q_1,q_2,q_5\})$ and $W = \{ab,aab\}a^{\le 5}$.
Let $g \in V$ be a non-negative non-zero vector, and let $i$ be such that $g(i) \neq 0$.
If $i=1$ then let $u=ab$, and otherwise let $u=aab$; then $g \mathrm{M}(u)$ has exactly one non-zero entry.
Then, for each $j \in \{1,2,5\}$, the vector $g \mathrm{M}(u a^{j'})$ for some $j'$ has the unique non-zero entry at $q_j$.
These vectors generate $V$, thus $W$ is complete for $V$ with respect to $g$.

The induced automaton of a partial DFA is a partial DFA acting on a subset of states $R \subseteq Q$.
It is built from two sets of words.
Let $W_1$ be a set of words such that $R = \bigcup_{w \in W_1} \delta(Q,w)$.
Thus, for each state in~$R$, there is some state mapped to it by a word from $W_1$.
Intuitively, the second set $W_2$ is any non-empty set of words that enriches the actions of words from $W_1$.
The induced automaton is $\mathrsfs{A}$ restricted to $R$ with alphabet $W_2 W_1$.
Note that its transition function is well defined, which is ensured by the fact that every word of the form $w_2 w_1 \in W_2 W_1$ has the action mapping every state $q \in Q$ into $R$ or to~$\bot$.

\begin{definition}[Induced automaton]\label{def:induced_automaton}
Let $W_1,W_2 \subseteq \Sigma^*$ be non-empty and $R = \{\delta(q,w) \mid q \in Q,\ w \in W_1, \delta(q,w) \neq \bot \}$.
If $R$ is non-empty, we define the \emph{induced automaton}
\[ \mathrsfs{A}^{\mathrm{I}(W_1,W_2)}=(R, W_2 W_1, \delta_{\mathrsfs{A}^{\mathrm{I}(W_1,W_2)}}), \]
where the transition function is defined in compliance with the actions of words in $\mathrsfs{A}$, i.e.,\\ ${\delta_{\mathrsfs{A}^{\mathrm{I}(W_1,W_2)}}(q,w) = \delta(q,w)}$ for all $q \in R$ and $w \in W_2 W_1$.
\end{definition}

We can analyze an induced automaton as a separate one, and synchronize the whole DFA using it, which is particularly profitable when $R$ is small.
Following our previous example, for  \Cref{fig:examples} (left) with $W_1 = \{b\}$ and $W_2 = \{ab,aab\}a^{\le 5}$ we obtain the induced automaton on $R=\{q_1,q_2,q_5\}$.
Furthermore, it is synchronizing already by a letter from $W_2 W_1$ (e.g., $abb$), and each reset word corresponds to a reset word of the original $\mathrsfs{A}$.

The following lemma states that the completeness of a set of words together with the synchronizability and strong connectivity of the whole DFA transfer to the induced automaton.
It generalizes~\cite[Theorem~2]{BS2016AlgebraicSynchronizationCriterion} to partial DFAs, and the proof uses a recursion instead of an augmenting argument.

\begin{lemma}\label{lem:induced_completeness}
Let $\mathrsfs{A}=(Q,\Sigma,\delta)$ be a strongly connected synchronizing partial DFA and let $W_1$ and~$W_2$ be two non-empty sets of words over $\Sigma$.
Let $\mathrsfs{A}^{\mathrm{I}(W_1,W_2)} = (R, W_2 W_1, \delta_{\mathrsfs{A}^{\mathrm{I}(W_1,W_2)}})$ be the induced automaton of $\mathrsfs{A}$.
If $W_2$ is complete for $\mathbb{V}(Q)=\mathbb{Q}^n$, then $W_2 W_1$ is complete for $\mathbb{V}(R)$, and $\mathrsfs{A}^{\mathrm{I}(W_1,W_2)}$ is synchronizing and strongly connected.
\end{lemma}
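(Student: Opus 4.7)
For completeness of $W_2 W_1$ on $\mathbb{V}(R)$: Given a non-negative non-zero vector $g \in \mathbb{V}(R) \subseteq \mathbb{R}^n$, completeness of $W_2$ gives $\chi(q) \in \lspan\{g\mathrm{M}(w_2) : w_2 \in W_2\}$ for every $q \in Q$. For any $p \in R$, pick $q \in Q$ and $w_1 \in W_1$ with $\delta(q,w_1)=p$, so $\chi(q)\mathrm{M}(w_1) = \chi(p)$; right-multiplying the expansion of $\chi(q)$ by $\mathrm{M}(w_1)$ gives $\chi(p) \in \lspan\{g\mathrm{M}(v) : v \in W_2 W_1\}$. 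Varying $p$ over $R$ exhibits a spanning set of $\mathbb{V}(R)$.

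Strong connectivity of $\mathrsfs{A}^{\mathrm{I}(W_1,W_2)}$ is then immediate. For $p,q \in R$, plug $g = \chi(p)$ into the just-proved completeness to obtain $\chi(q) = \sum_i \alpha_i \chi(p)\mathrm{M}(v_i)$ with $v_i \in W_2 W_1$; each summand is either $\chi(\delta(p,v_i))$ or zero, and linear independence of the characteristic vectors forces at least one $v_i$ to satisfy $\delta(p,v_i)=q$, which is an arc $p \to q$ in the induced automaton.

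Synchronizability I would handle by induction on $|R|$. The base $|R|=1$ is trivial. For $|R| \geq 2$, set $g = \sum_{p \in R}\chi(p) \in \mathbb{V}(R)$. The key observation is that every word $v \in W_2 W_1$ ends with some $w_1 \in W_1$, so $\delta(Q,v) \subseteq R \cup \{\bot\}$; consequently, a $v$ that is non-mortal and injective on $R$ must satisfy $\delta(R,v)=R$ by pigeonhole, giving $g\mathrm{M}(v)=g$. If every $v \in W_2 W_1$ were either mortal on $R$ or bijective on $R$, the span $\lspan\{g\mathrm{M}(v) : v \in W_2 W_1\}$ would collapse into $\lspan\{g\}$, a one-dimensional line that cannot contain any unit vector $\chi(q)$ when $|R|\geq 2$, contradicting the completeness for $\mathbb{V}(R)$. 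Thus some $v^* \in W_2 W_1$ is non-mortal but non-bijective on $R$, so $R^* := \delta(R,v^*)$ satisfies $\emptyset \neq R^* \subsetneq R$. I then apply the lemma recursively to $\mathrsfs{A}$ with the same $W_2$ and $W_1' := W_1 v^* = \{w_1 v^* : w_1 \in W_1\}$; a direct check shows $\bigcup_{w \in W_1'}\delta(Q,w) = R^*$, so the new induced automaton has state set $R^*$ with $|R^*|<|R|$, and the inductive hypothesis supplies a word $v' \in (W_2 W_1')^* \subseteq (W_2 W_1)^*$ synchronizing it. Then $v^* v' \in (W_2 W_1)^*$ synchronizes the original induced automaton on $R$.

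The main obstacle is producing the compressing word $v^*$. The crucial ingredient is the containment $\delta(Q,v) \subseteq R \cup \{\bot\}$ for $v \in W_2 W_1$: it upgrades "non-compressing on $R$" into the much stronger "bijecting $R$", trapping $\lspan\{g\mathrm{M}(v)\}$ inside the single line $\lspan\{g\}$ — where a unit vector $\chi(q)$ cannot hide once $|R|\geq 2$. This is also exactly the point where recursion replaces the linear-algebraic augmenting argument used for complete automata in~\cite{BS2016AlgebraicSynchronizationCriterion}, since matrices of partial transitions are not stochastic and the earlier augmenting technique breaks down.
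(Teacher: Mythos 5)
Your proposal is correct, and its skeleton matches the paper's: the synchronizability part is essentially identical (take $g=\chi(R)$, observe that if no letter of $W_2W_1$ compresses $R$ then every letter acts on $R$ as a permutation or kills it, so the span of $\{\chi(R)\mathrm{M}(v)\}$ collapses into $\lspan(\{\chi(R)\})\subsetneq\mathbb{V}(R)$, contradicting completeness; then recurse on the induced automaton with $W_1\{v^*\}$, exactly as the paper does). Where you genuinely diverge is the completeness step: the paper proves $\lspan(\{\alpha\mathrm{M}(w)\mid w\in\Sigma^*W_1\})=\mathbb{V}(R)$ by constructing, for each $r\in R$, a word $u_3u_2u_1$ built from a reset word and strong connectivity of $\mathrsfs{A}$, and only then uses $W_2$-completeness to replace $\Sigma^*$ by $W_2$; you instead exploit $\mathbb{V}(Q)=\mathbb{R}^n$ directly, expand each unit vector $\chi(q)$ in $\lspan(\{g\mathrm{M}(w_2)\})$ and push it forward by $\mathrm{M}(w_1)$, which is shorter and does not use synchronizability or strong connectivity of $\mathrsfs{A}$ at all in that step (both facts remain available as hypotheses, so this is fine). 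You also give an explicit argument for strong connectivity of the induced automaton (indeed showing every ordered pair of states of $R$ is joined by a single induced letter), which the paper's proof asserts in the statement but does not spell out. One small wording slip: ``non-mortal and injective on $R$'' should read ``defined on all of $R$ and injective on $R$'' for the pigeonhole remark to be literally true, but the dichotomy you actually use (mortal on $R$ versus bijective on $R$, negated to produce the compressing letter $v^*$) is airtight, so the argument stands.
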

\begin{proof}
First, let us prove the completeness of $W_2 W_1$ for $\mathbb{V}(R)$ with respect to an arbitrary non-negative non-zero vector $\alpha \in \mathbb{V}(R)$ with rational entries.
By the definition of $R$, for each state $r \in R$, there is a state $q_1$ and a word $u_1 \in W_1$ such that $\delta(q_1, u_1) = r$.
Since $\mathrsfs{A}=(Q,\Sigma,\delta)$ is synchronizing and strongly connected, there is a synchronizing word $u_2$ such that $\delta(Q, u_2) = \{q_1\}$.
Since the image is non-empty, there is a state $q_2 \in Q$ such that $\delta(q_2, u_2) = q_1$.
Let $q_3 \in R$ be a state such that $\chi(q_3) \scalar \alpha > 0$ (it exists since $\alpha \in \mathbb{V}(R)$ is non-zero).
Again by the strong connectivity of $\mathrsfs{A}$, there is a word $u_3$ whose action maps $q_3$ to $q_2$.
Hence, $u_3 u_2 u_1$ is a synchronizing word with the action mapping $q_3$ to~$r$.
Thus $\alpha\cdot\mathrm{M}(u_3 u_2 u_1) \in \lspan(\{\chi(r)\})$, and since $\alpha$ and $\mathrm{M}(u_3 u_2 u_1)$ are also non-negative, we have $(\alpha\cdot \mathrm{M}(u_3 u_2 u_1))(r) \neq 0$.
Moreover, $(\alpha\cdot\mathrm{M}(u_3 u_2 u_1))(r') = 0$ for all $r' \in R \setminus \{r\}$.
Since for every state $r \in R$ we can find such a word $u_3u_2u_1$, and $u_1 \in W_1$, we have $\lspan(\{\alpha\cdot \mathrm{M}(w) \mid w \in \Sigma^* W_1\}) = \mathbb{V}(R)$.
As $W_2$ is complete for $\mathbb{V}(Q)$, for each $w \in \Sigma^*$, we have 
$\alpha\cdot\mathrm{M}(w) \in \lspan(\{\alpha\cdot\mathrm{M}(w_2) \mid w_2 \in W_2\})$.
This implies that $\lspan(\{\alpha\cdot \mathrm{M}(w) \mid w \in W_2 W_1\}) = \lspan(\{\alpha\cdot \mathrm{M}(w) \mid w \in \Sigma^* W_1\}) = \mathbb{V}(R)$.
Hence, we get that $W_2W_1$ is complete for $\mathbb{V}(R)$ with respect to an arbitrary non-negative non-zero vector $\alpha \in \mathbb{V}(R)$ with rational entries.

It remains to prove that $\mathrsfs{B} = \mathrsfs{A}^{\mathrm{I}(W_1,W_2)}$ is synchronizing.
If $|R| = 1$, we are done, so consider $|R| \ge 2$.
Suppose that $R$ is not compressible in $\mathrsfs{B}$.
Then, for each word $w \in W_2 W_1$, we have $\delta(R, w) = R$ or $\delta(R,w) = \emptyset$.
This contradicts with $W_2 W_1$ being complete for $\mathbb{V}(R)$ with respect to $\chi(R)$ since in this case $\lspan(\{\chi(R)\cdot\mathrm{M}(w) \mid w \in W_2 W_1\}) \subseteq \lspan(\{\chi(R)\}) \subsetneq \mathbb{V}(R)$.
Hence $R$ is compressible in $\mathrsfs{B}$ and the statement follows by induction.
Note that the conditions of the lemma are met for the induced automaton $\mathrsfs{A}^{\mathrm{I}(W_1 \{w\},W_2)} = (\delta(R,w),W_2 W_1 \{w\},\delta_{\mathrsfs{A}^{\mathrm{I}(W_1 \{w\},W_2)}})$, where $w \in W_2 W_1$ is a word that compresses $R$.
\end{proof}

The following corollary directly follows from \Cref{lem:induced_completeness}, since $\Sigma^{\leq n-1}$ is always complete for $\mathbb{V}(Q)$ in the case of a strongly connected synchronizing partial DFA.

\begin{corollary}\label{cor:rt_by_induced}
Let $\mathrsfs{A}=(Q,\Sigma,\delta)$ be a strongly connected synchronizing partial DFA with $n$ states, and let $w \in \Sigma^*$ be a word such that $R = \delta(Q,w) \neq \emptyset$.
Let $W_1 = \{w\}$, $W_2 = \Sigma^{\le n-1}$, and $\mathrsfs{A}^{\mathrm{I}(W_1,W_2)} = (R,\Sigma^{\leq n-1}\{w\},\delta_{\mathrsfs{A}^{\mathrm{I}(W_1,W_2)}})$ be the induced automaton. Then
\[\rt(\mathrsfs{A}) \leq |w| + (|w| + n-1)\cdot\rt(\mathrsfs{A}^{\mathrm{I}(W_1,W_2)}).\]
\end{corollary}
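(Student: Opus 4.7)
The plan is to derive the bound as a direct instance of Lemma~\ref{lem:induced_completeness} applied to the particular choice $W_1 = \{w\}$, $W_2 = \Sigma^{\leq n-1}$, followed by a straightforward length count. First I would verify the hypotheses of the lemma: $W_1$ and $W_2$ are both non-empty, $\mathrsfs{A}$ is strongly connected and synchronizing by assumption, and $R = \delta(Q,w)$ is non-empty by hypothesis. The only precondition requiring a genuine argument is that $W_2 = \Sigma^{\leq n-1}$ is complete for $\mathbb{V}(Q) = \mathbb{R}^n$.

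For the completeness claim I would argue by a standard dimension-growth argument. Given any non-negative non-zero $g \in \mathbb{R}^n$, set $V_k = \lspan(\{g\,\mathrm{M}(u) \mid u \in \Sigma^{\leq k}\})$, so $V_0 \subseteq V_1 \subseteq \cdots \subseteq \mathbb{R}^n$. A routine check shows that if $V_{k+1}=V_k$ then for any word $ua$ with $u\in\Sigma^{\leq k}$ and $a\in\Sigma$ one has $g\,\mathrm{M}(ua) = v\,\mathrm{M}(a)$ with $v\in V_k \subseteq V_{k+1}$, hence $V_j = V_k$ for all $j\ge k$; so the chain stabilizes after at most $n-1$ steps, giving $V_{n-1} = V_\infty := \lspan(\{g\,\mathrm{M}(u) \mid u \in \Sigma^*\})$. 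It then remains to show $V_\infty = \mathbb{R}^n$: picking any state $p$ in the support of $g$, I would use strong connectivity to steer $p$ into the preimage of the synchronization target of some reset word, producing some word $u$ with $g\,\mathrm{M}(u) = c\,\chi(q)$ for some $c>0$; strong connectivity then yields positive scalar multiples of $\chi(q')$ for every $q'\in Q$, so $V_\infty=\mathbb{R}^n$. This is the step I expect to be the main obstacle, because in the partial setting the matrices $\mathrm{M}(u)$ can have zero rows, so one must combine strong connectivity with synchronizability precisely to ensure that the orbit of $g$ never collapses to the zero vector prematurely.

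Once completeness is established, Lemma~\ref{lem:induced_completeness} guarantees that $\mathrsfs{A}^{\mathrm{I}(W_1,W_2)}$ is synchronizing. To produce a reset word for $\mathrsfs{A}$, I would first apply $w$, mapping $Q$ into $R$ at a cost of $|w|$ letters. Then I would execute a shortest reset word of $\mathrsfs{A}^{\mathrm{I}(W_1,W_2)}$, which consists of $\rt(\mathrsfs{A}^{\mathrm{I}(W_1,W_2)})$ letters, each being an element of $W_2 W_1 = \Sigma^{\leq n-1}\{w\}$ and hence a word over $\Sigma$ of length at most $(n-1)+|w|$. Concatenating and summing lengths, the resulting word is a reset word for $\mathrsfs{A}$ of length at most $|w| + (|w|+n-1)\cdot \rt(\mathrsfs{A}^{\mathrm{I}(W_1,W_2)})$, which is exactly the claimed bound.
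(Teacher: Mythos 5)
Your proposal is correct and follows essentially the same route as the paper: establish that $\Sigma^{\le n-1}$ is complete for $\mathbb{V}(Q)$ via the ascending chain argument combined with strong connectivity and synchronizability (steering a state in the support of $g$ through a reset word to obtain positive multiples of every $\chi(q)$), then invoke Lemma~\ref{lem:induced_completeness} and count lengths. The only difference is cosmetic: the paper leaves the final length-accounting step implicit, while you spell it out, which is the natural completion of the argument.
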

\begin{proof}
Let $\alpha$ be a non-negative non-zero vector, and let $q \in Q$ be such that $\chi(q) \scalar \alpha > 0$.
Since the DFA is strongly connected and synchronizing, for every $p \in Q$, there is a word $u_p$ such that $\delta(Q,u_p) = \{p\}$ and furthermore $\delta(q,u_p) = p$.
It follows that $\lspan(\{\alpha\cdot\mathrm{M}(w) \mid w \in \Sigma^*\}) = \mathbb{V}(Q)$.
By the usual ascending chain argument (e.g., \cite{Kari2003Eulerian,Pin1972Utilisation,Szykula2018ImprovingTheUpperBound}), $\lspan(\{\alpha\cdot\mathrm{M}(w) \mid w \in \Sigma^{\leq n-1}\}) = \lspan(\{\alpha\cdot\mathrm{M}(w) \mid w \in \Sigma^*\}) = \mathbb{V}(Q)$.
Thus, we can apply \Cref{lem:induced_completeness} for the induced automaton.
\end{proof}

The corollary is useful for deriving upper bounds for DFAs with a word of small rank.
Having such a word $w$, we can further synchronize $R$ through the induced automaton instead of trying to do this directly.
Although every letter of $\mathrsfs{A}^{\mathrm{I}(W_1,W_2)}$ corresponds to a word of length $|w|+n-1$ in the original $\mathrsfs{A}$, if $R$ is small enough, this yields a better upper bound.
We show its application in the next subsection.

\subsection{The \texorpdfstring{\v{C}ern\'{y}}{Cerny} conjecture}

The famous \v{C}ern\'{y} conjecture is a special case of the rank conjecture for complete DFAs for $r=1$.
Let $\mathfrak{C}(n)$ be the maximum length of the shortest reset words of all $n$-state synchronizing complete DFAs.
It is well known that $\mathfrak{C}(n) \ge (n-1)^2$ \cite{Cerny1964}.
The \v{C}ern\'{y} conjecture states that $\mathfrak{C}(n) = (n-1)^2$, but the best proved upper bound is cubic \cite{Shitov2019,Szykula2018ImprovingTheUpperBound}.

Let $\mathfrak{C}_\mathrm{P}(n)$ be the maximal length of the shortest reset words of all $n$-state synchronizing strongly connected partial DFAs.
We show that an improvement of the upper bound on $\mathfrak{C}(n)$ implies a similar (but slightly weaker) improvement on $\mathfrak{C}_\mathrm{P}(n)$.
In particular, the \v{C}ern\'{y} conjecture implies that $\mathfrak{C}_\mathrm{P}(n) \le n^2 + \O(n)$.

To prove the following theorem, we combine several techniques, in particular, the inseparability equivalence, the collecting automaton, and an algebraic upper bound on the reset threshold of a complete DFA with a short word of small rank~\cite{BS2016AlgebraicSynchronizationCriterion}.

\begin{theorem}\label{thm:cerny_transfer}
Suppose that for all $n$, $\mathfrak{C}(n) \le n^k$, for some fixed $2 \le k \le 3$.
Then:
\[ \mathfrak{C}_\mathrm{P}(n) \le \mathfrak{C}(n) + \O(n^{2-2/k}) \le n^k + o(n^k) .\]
\end{theorem}
\begin{proof}
Let $\mathrsfs{A} = (Q,\Sigma,\delta)$ be a synchronizing partial DFA with $n$ states.
Let $T$ be a collecting tree of~$\mathrsfs{A}$ with a root class $[r]$ containing the smallest number of states.
We consider the collecting automaton $\mathrsfs{A}^{\mathrm{C}(T)} = (Q,\Sigma \cup \{\gamma\},\delta^{\mathrm{C}(T)})$.
By \Cref{lem:collecting_automaton_synchro}, $\mathrsfs{A}^{\mathrm{C}(T)}$ is synchronizing.
We have two cases, depending on the number $\kappa(Q)$ of inseparability classes~of~$\mathrsfs{A}$.

First, suppose that $\kappa(Q) \le 2 n^{1-1/k}$.
Then, by \Cref{cor:voiding_lemma} (for $S=Q$), there is a word $v$ of length at most
\[ (\kappa(Q)-1)\kappa(Q)/2 < \kappa(Q)^2/2 \le 2 n^{2-2/k} \]
such that $\delta(Q,v)$ is non-empty and is contained in one equivalence class, say $[p]$.
Since $\mathrsfs{A}$ is strongly connected, there is a word $u_{p \to r}$ of length at most $n-1$ whose action maps $[p]$ into $[r]$.
Let $w'$ be a reset word for $\mathrsfs{A}^{\mathrm{C}(T)}$ of length at most $\mathfrak{C}(n)$.
In particular, $w'$ synchronizes $[r]$, so by \Cref{lem:synchro_class_by_collecting}, we get a word $w$ of length at most $\mathfrak{C}(n)$ that synchronizes $[r]$ in $\mathrsfs{A}$.
Then, $v u_{p \to r} w$ is a reset word for $\mathrsfs{A}$ of length at most $2 n^{2-2/k} + n-1 + \mathfrak{C}(n) = \mathfrak{C}(n) + \O(n^{2-2/k})$.

In the second case, we have $\kappa(Q) > 2 n^{1-1/k}$.
Then the size of $[r]$, which has been chosen to have the smallest size, is at most $\frac{1}{2}n^{1/k}$.
Thus, $\gamma^{n-1}$ is a word of rank at most $\frac{1}{2}n^{1/k}$.
Then we apply~\Cref{cor:rt_by_induced} (cf.\
\cite[Theorem~2]{BS2016AlgebraicSynchronizationCriterion}) for $\mathrsfs{A}^{\mathrm{C}(T)}$ with this word, obtaining that the reset threshold of $\mathrsfs{A}^{\mathrm{C}(T)}$ is upper bounded by $(n-1) + 2(n-1)\cdot\mathfrak{C}(\lfloor \frac{1}{2} n^{1/k} \rfloor)$.
Using the assumed inequality $\mathfrak{C}(\lfloor 2 n^{1/k} \rfloor) \le (\frac{1}{2} n^{1/k})^k$, we get that there is a reset word $w'$ for $\mathrsfs{A}^{\mathrm{C}(T)}$ of length at most
\[ (n-1)+2(n-1)\cdot(\frac{1}{2}n^{1/k})^k = (n-1)+2(n-1)\cdot \frac{1}{2^k}n.\]
Now, we return to $\mathrsfs{A}$.
By \Cref{cor:voiding_lemma} (for $S=Q$), we get a word $v$ of length at most $(n-1)n/2$ such that $\delta(Q,u)$ is non-empty and is contained in one equivalence class $[p]$.
As in the first case, there is a word $u_{p \to r}$ of length at most $n-1$ whose action maps $[p]$ into $[r]$.
By \Cref{lem:synchro_class_by_collecting}, from~$w'$ we obtain a word $w$ that synchronizes $[r]$ and its length is also at most $(n-1)+2(n-1)\cdot \frac{1}{2^k}n$.
Finally, $v u_{p \to r} w$ is a reset word for $\mathrsfs{A}$ of length at most $(n-1)n/2 + (n-1) + (n-1)+2(n-1)\cdot \frac{1}{2^k}n$.
Since $k \ge 2$, we have:
\[ (n-1)n/2 + (n-1) + (n-1)+2(n-1)\cdot \frac{1}{2^k}n \le n^2/2 + 2(n-1) + 2 n^2 / 4 = n^2 + 2(n-1) \le \mathfrak{C}(n) + \O(n) .\]

From both cases and since $2 \le k \le 3$, we conclude that $\rt(\mathrsfs{A}) \le \mathfrak{C}(n) + \O(n^{2-2/k})$.
\end{proof}

Note that, in particular, if $\mathfrak{C}(n) = (n-1)^2$ then the extra component is linear.
For $\mathfrak{C}(n) \le n^3$, it is $+ \O(n^{4/3})$.

From \Cref{thm:cerny_transfer}, it follows that all upper bounds on the reset threshold of a complete DFA transfer to upper bounds for partial DFAs, up to a subquadratic component.
Thus $\mathfrak{C}_\mathrm{P}(n) \le 0.1654 n^3 + \O(n^2)$ \cite{Shitov2019}.
The extra component is likely not needed, but it is difficult to completely get rid of it in general, as for that we could not lengthen by any means the reset word assumed for a complete DFA.
However, it is easy to omit it when reproving particular bounds for complete DFAs, both combinatorial~\cite{Pin1983OnTwoCombinatorialProblems} and based on avoiding words~\cite{Shitov2019,Szykula2018ImprovingTheUpperBound}.
We conjecture that $\mathfrak{C}_\mathrm{P}(n) = \mathfrak{C}(n)$ for all $n$.

\subsection{The literal automaton of a finite prefix code}\label{sec:literal_automata}

We now use the obtained results about induced automata to get better bounds for partial literal automata of finite prefix codes (defined in~\Cref{subs:stars}).
To do so, we first need to prove that such automata admit short enough words of small rank.
It is known that every complete literal automaton over an alphabet~$\Sigma$ has a word of rank and length at most $\lceil\log_{|\Sigma|} n\rceil$ (\cite[Lemma~16]{BS2016AlgebraicSynchronizationCriterion}, cf.\ \cite[Lemma~14]{BiskupPlandowski2009HuffmanCodes}).
However, this is no longer true for non-mortal words in partial literal automata \cite{RS2018FindingShortSyncWords} and no similar statement was known for any wider class than complete literal automata.
We prove that there exist $\O(\log n)$-rank non-mortal words of length $\O(n)$ in such automata, excluding the case of a code with only one word.
Then we use this result to provide an $\O(n \log^3 n)$ upper bound on the reset threshold of $n$-state synchronizing partial literal automata, asymptotically matching the known upper bound for complete literal automata \cite{BS2016AlgebraicSynchronizationCriterion}.

We start with a special case of one-word codes.
A non-empty word $w$ is called \emph{primitive} if it is not a power of a shorter word, i.e., $w \neq u^k$ for every word $u$ and $k \ge 2$. The upper bound on $\rt(\mathrsfs{A}_X)$ follows from a result of Weinbaum (\cite{Harju2006,Weinbaum1990}).

\begin{proposition}\label{pro:code_size_1}
Let $X = \{x\}$ be a one-word prefix code, and suppose that $x = y^k$, where $y$ is a non-empty primitive word and $k \ge 1$.
Then $\mathrsfs{A}_X$ has rank $k$.
If $\mathrsfs{A}_X$ is synchronizing, then $\rt(\mathrsfs{A}_X) \le \frac{|x|}{2}$, and this bound is tight.
\end{proposition}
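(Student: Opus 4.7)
The plan is to analyze the action of $\mathrsfs{A}_X$ in coordinates adapted to the decomposition $x = y^k$ and then reduce the reset-word bound to a classical combinatorics-on-words result. Identifying each state of $\mathrsfs{A}_X$ with the length $i$ of the prefix it represents, an easy induction on $|w|$ shows that $\delta(i,w)$ is defined if and only if $p_i w$ (with $p_i$ the length-$i$ prefix of $x$) is itself a prefix of the infinite periodic word $x^\omega$, and in that case $\delta(i,w)$ is the state of length $(i+|w|) \bmod |x|$ (each time a full copy of $x=y^k$ is consumed, it is erased from the buffer). Writing $|q| = j|y| + r$ with $0 \le j < k$ and $0 \le r < |y|$, every state takes the form $q = y^j z_r$, where $z_r$ denotes the prefix of $y$ of length $r$.

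For the rank claim, I would first use $w = y$: the condition $\delta(y^j z_r, y) \neq \bot$ becomes $z_r y \preceq y^\omega$, equivalently $y z_r = z_r y$, which by the commutation lemma and primitivity of $y$ forces $z_r = \varepsilon$. Hence $\delta(Q,y) = \{\varepsilon, y, y^2, \ldots, y^{k-1}\}$, of size $k$. For the matching lower bound, I would fix any non-mortal word $w$ and set $R_w = \{r : z_r w \preceq y^\omega\}$, which is non-empty. For each $r \in R_w$, all $k$ states $y^j z_r$ with $j=0,\ldots,k-1$ are defined under $w$, because that condition depends only on $r$; their images have lengths $(j|y|+r+|w|) \bmod |x|$ that are pairwise distinct as $j$ varies (they differ by multiples of $|y|$ inside a window of length $|x|=k|y|$) and occupy different residue classes modulo $|y|$ for different $r$. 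Thus $|\delta(Q,w)| = k\,|R_w| \ge k$, so the minimum non-zero rank equals $k$.

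When $\mathrsfs{A}_X$ is synchronizing, the minimum rank equals $1$, which forces $k = 1$, i.e., $x$ itself is primitive. In this case a word $w$ is a reset word if and only if it occurs in $x^\omega$ at a unique starting position modulo $|x|$. Weinbaum's theorem \cite{Harju2006,Weinbaum1990} guarantees, for every primitive word of length $n \ge 2$, the existence of such a factor of length at most $\lfloor n/2 \rfloor$, which serves as the desired reset word (the case $n=1$ is trivial). For tightness, I would exhibit a primitive word of length $n$ whose shortest reset word has length exactly $\lfloor n/2 \rfloor$ (e.g., $x=aabb$ for $n=4$) and verify this by direct inspection of the $n$ cyclic rotations. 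The main subtleties are the invocation of Weinbaum's theorem in precisely the shape needed and, for tightness, producing a suitable family of primitive words; the remainder is a careful use of the primitivity of $y$ together with the length-modulo-$|x|$ invariant of the action.
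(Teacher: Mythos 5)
Your core argument is sound and, for the main bound, follows essentially the same route as the paper: both reduce the synchronizing case to a primitive $x$ (via the rank claim) and then invoke Weinbaum's theorem to produce a factor of the cyclic word occurring at a unique position modulo $|x|$ -- equivalently, a word whose action is defined for exactly one state -- of length at most $|x|/2$. Where you genuinely differ is the rank claim: the paper observes that the quotient $\mathrsfs{A}_X/_{\botequiv}$ is the literal automaton of $\{y\}$, with $n/k$ inseparability classes of size $k$, and uses primitivity of $y$ to conclude the quotient is synchronizing; you instead compute directly on the cycle, characterizing definedness of $\delta(q,w)$ via prefixes of $y^\omega$, applying the commutation lemma to show $\delta(Q,y)=\{\varepsilon,y,\ldots,y^{k-1}\}$, and counting residues modulo $|y|$ and $|x|$ for the lower bound. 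Your version is more elementary and self-contained (it also makes explicit that every non-mortal word has rank a multiple of $k$), at the price of some bookkeeping; the paper's version is shorter but rests on the inseparability machinery introduced earlier.

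The one genuine gap is tightness. You verify only $x=aabb$ and defer ``a suitable family of primitive words,'' and this deferred step is not routine: the natural candidate $a^k b^k$ fails for $k\ge 3$, since $ba$ occurs at a unique cyclic position and hence is a reset word of length $2$. The paper's family is $x=a^k b\, a^{k+1} b$: every factor of length at most $k$ of the cyclic word occurs at least twice (a power of $a$ fits in both $a$-blocks, and a factor containing one $b$ fits around either occurrence of $b$), while $a^{k+1}$ occurs uniquely, so the reset threshold is $k+1=\lfloor |x|/2\rfloor$. With such a family supplied, your proof is complete.
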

\begin{proof}
Observe that the literal automaton of a one-word code forms as a single cycle labeled by the letters of this word.
No two states can be mapped to a single state by the action of any letter, which excludes pair compression.

Observe that $\mathrsfs{A}_X/_{\botequiv}$ is the literal automaton $\mathrsfs{A}_Y$ of $Y = \{y\}$.
In $\mathrsfs{A}_X$, there are $n/k$ inseparability classes, each of size $k$.
Hence, the rank of $\mathrsfs{A}_X$ equals $k$ times the rank of $\mathrsfs{A}_Y$.
Since $y$ is primitive, its action is defined only for the root state of $\mathrsfs{A}_Y$ (and maps the root to itself).
Thus $\mathrsfs{A}_Y$ is synchronizing, so the rank of $\mathrsfs{A}_X$ equals $k$.

Now, we will bound the length of the shortest reset words.
A word $t$ is called a \emph{conjugate} of $t'$ if $t = uv$ and $t' = vu$ for some words $u,v$.
By a result of Weinbaum, stated in terms of automata, every primitive word $x$ has a conjugate $x' = uv$ such that both $u$ and $v$ have the action defined exactly for one state of~$\mathrsfs{A}_X$~\cite{Harju2006}.
Thus, both $u$ and $v$ are reset for $\mathrsfs{A}_X$.
The shorter of them has length at most $\frac{|x|}{2}$.

To see that the bound can be met, consider the code $Z = \{a^k b a^{k + 1} b\}$.
The shortest reset word for $\mathrsfs{A}_Z$ is $a^{k + 1}$ of length $\lfloor \frac{2k + 3}{2} \rfloor = k + 1$.
\end{proof}

For the literal automaton of a finite prefix codes consisting of at least two words, there always exists a word of linear length and logarithmic rank.

\begin{theorem}\label{thm:log-rank}
Let $X$ be a finite prefix code with at least two words.
Let $\mathrsfs{A}_X=(Q,\Sigma,\delta)$ be its partial literal automaton with $n$ states and height $h$.
Then there exists a word of length at most $2h$ and of rank at most $\lceil \log_2 hn \rceil + \lceil \log_2 h \rceil$ for $\mathrsfs{A}_X$.
Moreover, such a word can be found in polynomial time in $n=|Q|$.
\end{theorem}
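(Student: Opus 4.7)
The plan is to construct the desired word as a concatenation $w = w_1 w_2$ with $|w_1|, |w_2| \le h$, exploiting the prefix tree structure of $X$.

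For $w_1$, I would take the length-$h$ prefix of a longest codeword $c^*$ of length $h+1$ (which exists since the height is $h$). Then $w_1$ is non-mortal because $\delta(\varepsilon, w_1) = w_1$ is a valid state. The key observation is that $|w_1| = h$ forces every non-root state $q$ of depth $\ge 1$ reading $w_1$ either to reach $\bot$ or to undergo at least one reset, since otherwise $qw_1$ would have depth exceeding $h$. Consequently, if the last reset while reading $w_1$ from $q$ occurs at position $j$, the final state is the suffix $w_{j+1}\cdots w_h$ of $w_1$; otherwise $q = \varepsilon$ and the final state is $w_1$ itself. Hence $\delta(Q, w_1)$ is contained in the chain of at most $h+1$ suffixes of $w_1$ that happen to be proper prefixes of codewords.

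For $w_2$, I plan to compress this chain iteratively, using at most $h$ additional letters. At each step, append a short extension chosen so that either two current image states are merged or at least one is sent to $\bot$, while keeping the combined word non-mortal (most easily by ensuring the trajectory of $\delta(\varepsilon, w_1)$ remains alive). A joint potential argument tracking both the chain length and the tree positions of the surviving states should show that within total length $\le h$ the rank drops to at most $\lceil \log_2 hn \rceil + \lceil \log_2 h \rceil$. The two additive logarithmic terms likely correspond to two distinct compression mechanisms: one halving based on the linear suffix-chain structure (yielding the $\log_2 h$ term), and one based on the $n$-node tree structure (yielding the $\log_2 hn$ term).

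The main obstacle will be the construction and analysis of $w_2$. The chain-of-suffixes image after Phase~1 is rigid, and extracting both logarithmic compression factors simultaneously within a single word of length $\le h$ requires a careful combinatorial argument exploiting the interplay between chain merges and the branching of the prefix tree. A naive halving on the chain alone yields only the $\log_2 h$ term; the $\log_2 hn$ part must come from a subtler argument, perhaps via a pigeonhole over reset positions in $w_2$ matching nodes of the prefix tree, or by exploiting that many states die upon reaching depth $> h$. Polynomial-time computability of $w$ follows from the iterative construction, since each step's local search over candidate short extensions is polynomial in $n$.
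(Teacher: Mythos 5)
Your Phase~1 is correct and is genuinely different from what the paper does: taking $w_1$ to be the length-$h$ prefix of a longest codeword, every non-root state reading $w_1$ must die or pass through the root (else its depth would exceed $h$), so $\delta(Q,w_1)$ is contained in the at most $h+1$ suffixes of $w_1$ that are states, and $w_1$ is non-mortal via the root's trajectory. But this only brings the image down to size $O(h)$, and the entire content of the theorem is the drop to $O(\log hn)$. Your Phase~2 is not a proof: you assert that ``a joint potential argument \ldots should show'' that appending at most $h$ further letters reduces the rank to $\lceil \log_2 hn\rceil + \lceil\log_2 h\rceil$, and you yourself flag that the mechanism producing the $\log_2 hn$ term is unknown (``perhaps via a pigeonhole \ldots or by exploiting that many states die''). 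There is no identified reason why the suffix set can be halved, or even compressed at constant amortized cost per letter. In a literal automaton two distinct live states can merge only by reaching the root simultaneously (every non-root state has a unique incoming letter), and the suffixes of $w_1$ can sit at states with arbitrary branching, so no single letter need kill or merge any fixed fraction of them; the ``naive halving on the chain'' you invoke for the $\log_2 h$ term is itself unsubstantiated. So the crux of the theorem is missing.

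For comparison, the paper gets the logarithmic bound from structure your construction discards. It fixes the \emph{pivot} $p$, the branching state closest to the root, with two defined letters $a,b$; every state on the root-to-$p$ path $P$ has exactly one defined outgoing letter. A filtering algorithm $\alpha$ consumes a letter of its input only when the active set contains $p$, so the number of passes through $p$ along $\alpha(w)$ is at most $|w|$; distinct inputs give distinct outputs unless the output already has length $h$, and a pigeonhole over the $hn$ inputs $w\in\{a,b\}^{\lceil\log_2 hn\rceil}$ (using uniqueness of incoming letters) yields a non-mortal $\alpha(w)$ of length at most $h$ whose surviving states all pass through the root. The image then splits: outside $P$ it has at most $\lceil\log_2 hn\rceil$ states (one per pass through $p$), and inside $P$ it is compressed by a word of length at most $h$ that repeatedly routes the deepest active state to $p$ and applies whichever of $a,b$ is undefined for at least half of the remaining $P$-states -- a halving that works precisely because states of $P$ have a single defined letter, which is what gives the $\lceil\log_2 h\rceil$ term. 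Your suffix-chain image has neither the bounded-passes-through-$p$ property nor the single-outgoing-letter property, so neither of the paper's two counting mechanisms applies, and you would need a new argument (or a return to something like the pivot/filtering construction) to close the gap.
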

\begin{proof}
The general idea is as follows.
We construct a word from the theorem in two phases.
First, we define an auxiliary \emph{filtering} algorithm that computes some function $\alpha\colon \Sigma^* \to \Sigma^*$.
We consider the results of the algorithm for a lot of short (logarithmic length) input words $w$ and show that at least one of them satisfies that $\alpha(w)$ is non-mortal, has length at most $h$, and every state from $Q$ is either sent to $\bot$ or goes through the root state of the literal automaton by its action.
Then, we use specific properties of the image $\delta(Q,\alpha(w))$ to divide it into two disjoint sets: one that has up to $h$ states, but on a single specific path, and the other one with a small (logarithmic) number of states. 
In the second phase, we construct a word $v$ of length also bounded by $h$, such that its action maps all the states from the mentioned specific path to a subset of at most logarithmic size.
The concatenation of both words $\alpha(w) v$ is a word of length at most $2h$ satisfying the theorem.

\paragraph*{Selection of pivot.}
Since $X$ consists of at least two words, there exists a state $p$ such that at least two letters have defined the transition from it.
We choose $p$ to be a state with this property that is at the minimal distance from the root state $r$, i.e., such that the length of the shortest word~$w$ such that $\delta(r,w)=p$, is the smallest possible. Let $a, b \in \Sigma$ be two letters with defined transitions going from $p$. 
Since each state on the path from the root to $p$ has only one letter with a defined outgoing transition, such a state $p$ is unique.
For the literal automaton in \Cref{fig:examples} (right), the chosen state~$p$ is~$q_3$.

\paragraph*{Filtering algorithm.} Consider the following auxiliary algorithm that takes a word $w \in \{a, b\}^*$ as the input. Recall that $a, b$ are two letters that have defined transitions going from $p$.
We perform steps for $i=1,2,\ldots$.
In each step, we keep a subset of \emph{active} states $S_i \subseteq Q$, and two words $u_i$, $w_i$. The construction of the algorithm guarantees that the set of active states always remains non-empty and that~$\alpha(w)$ is non-mortal for every word $w \in \{a, b\}^*$.
At the beginning, $S_1 = Q$, $u_1 = \varepsilon$, and $w_1 = w$.
In the $i$-th step, we have two cases.\\
\textit{(Case~1)} If $p$ is active, then let $a$ be the first letter of $w_i$, assuming $w_i$ is non-empty.
We apply letter~$a$ to~$S_i$, and let $S_{i+1}=\delta(S_i,a)$ for the next iteration.
Also, we move the first letter from $w_i$ to the end of~$u_i$, that is, we set $u_{i+1} = u_i a$ and $w_{i+1}$ to be such that $w_i=a w_{i+1}$.\\
\textit{(Case~2)} If $p$ is not active, then let $y$ be the smallest letter, under some fixed order on $\Sigma$ (the same for every step) that is non-mortal for $S_i$, i.e., with $\delta(S_i,y) \neq \emptyset$.
Such a letter always exists in a literal automaton, as for every state we can find a non-mortal letter.
We apply $y$ to $S_i$, so $S_{i+1}=\delta(S_i,y)$, and set $u_{i+1} = u_i y$ and $w_{i+1} = w_i$.\\
\textit{(Termination)} If $w_{i+1}$ is empty or $u_{i+1}$ has length at least $h$, we stop. The word $u_{i+1}$ is the output of the algorithm.

To see that the algorithm terminates, observe that at every step the length of $u_i$ is increased by~$1$, thus there are at most $h$ iterations.

For a word $z$, denote by $\alpha(z)$ the word $u_{i+1}$ obtained in the last step of the algorithm.
Note that $\delta(Q,u_i)=S_i$ holds in every step.
Since every time we apply a letter that is non-mortal for the set of active states, this word is non-mortal for $\mathrsfs{A}_X$.
Its length is also at most $h$.
Observe that for two different words $v_1, v_2$, either the words $\alpha(v_1)$, $\alpha(v_2)$ are different or they have length $h$.
Indeed, if their length is smaller than $h$, then the algorithm terminates when $w_{i+1}$ is empty.
Consider the longest common prefix~$v$ of $v_1$ and $v_2$, so $v_1 = v v'_1$ and $v_2 = v v'_2$.
The algorithm executes in the same way up to the $|v|$-th time when Case~1 is applied.
Then it performs some iterations with Case~2, which do not decrease the length of $w_i$.
Finally, Case~1 must hold, and the algorithm applies the first letters of $v'_1$ and $v'_2$.
At this moment, the two constructed words become different.

\paragraph*{All-through-root word construction.}
Now, we show that there exists a word $w \in \{a,b\}^*$ of length $\lceil \log_2 hn \rceil$ such that for every state $q \in Q$, either the action of some prefix of $\alpha(w)$ maps $q$ to the root $r$, or the action of $\alpha(w)$ is undefined for $q$. Informally, it means that when applying $\alpha(w)$ letter by letter, every state of the literal automaton will be at some point mapped to the root state or killed. Recall also that for every $w \in \{a,b\}^*$, $\alpha(w)$ is non-mortal.

Recall two letters $a,b \in \Sigma$ such that state $p$ has a defined outgoing transition under both of them.
Consider the set of the resulting words $W = \{\alpha(w) \mid w \in \{a,b\}^{\lceil \log_2 hn \rceil}\}$.
Note that the alphabet may contain other letters besides $a$ and $b$, however $W$ is constructed only by applying $\alpha$ to words in $\{a,b\}^*$.
If there is at least one word of length $h$ in $W$, we are done, since $h$ is the height of the literal automaton and~$\alpha(w)$ is guaranteed to be non-mortal for every word $w \in \{a,b\}^*$.
Otherwise, all words in $W$ are of a length less than $h$, and there are at least $hn$ pairwise distinct words.
Then, by the pigeonhole principle, there is a subset $W' \subseteq W$ consisting of at least $n+1$ words of equal length.
Suppose for a contradiction that for every word $\alpha(w) \in W'$, there is a state $q_{\alpha(w)}$ such that $\delta(q_{\alpha(w)}, \alpha(w)) \ne \bot$ and that is not mapped to $r$ by the action of any prefix of $\alpha(w)$.
Note that in the literal automaton of a prefix code, every state except the root state has an incoming transition for exactly one letter.
Thus for a pair $q$, $q'$ of states and two words $t$, $t'$ of the same length, we can have $\delta(q, t) = \delta(q', t) \ne \bot$ only if some prefix of $t$ (or $t'$) has the action mapping $q$ (or $q'$, respectively) to the root state.
Therefore, all states $q_{\alpha(w)}$ for $\alpha(w) \in W'$ are pairwise distinct, and we have at least $n+1$ such states.
This means that the number of states of $\mathrsfs{A}_X$ is larger than $n$, which is a contradiction.
Thus, we get a non-mortal word $\alpha(w)$ of length at most $h$ with the property that every state $q \in Q$ with $\delta(q, \alpha(w)) \ne \bot$ is mapped to the root state by some prefix of~$\alpha(w)$.

\paragraph*{Splitting the image.}
We consider the set $\delta(Q,\alpha(w))$ and split it into two sets, one with specific states and the other one of logarithmic size.
Let $P \subseteq Q$ be the set of all states in the unique shortest path from~$r$ to $p$ excluding $p$ itself.
For the literal automaton in \Cref{fig:examples} (right), $P=\{q_1,q_2\}$.
The size of $P$ is at most $h$.
Note that, by the selection of~$p$, for each state in $P$, exactly one letter has a defined transition.
Thus, the set $Q \setminus P$ has the property that the unique shortest path from~$r$ to each state in $Q \setminus P$ contains~$p$.
Note that $\delta(Q,\alpha(w)) \cap (Q \setminus P)$ has up to $\lceil \log_2 hn \rceil$ states, since by the construction of $\alpha(w)$, at most $|w|$ prefixes of $\alpha(w)$ have the action mapping some state to $p$ (when Case~1 holds).
Below, we deal with the sets $\delta(Q,\alpha(w)) \cap P$ and $\delta(Q,\alpha(w)) \cap (Q \setminus P)$ separately.

\paragraph*{Compressing the part in $P$.}
Denote by $R$ the image $\delta(Q, \alpha(w))$. If $R \cap P=\emptyset$, then we skip this phase, take $v=\varepsilon$ and proceed to the summary.
Otherwise, we show how to construct a word $v$ of length at most $h$ such that $1 \le |\delta(R  \cap P,v)| \le \lceil \log_2 h \rceil$.

Consider the following procedure constructing an auxiliary word $v$.
We perform steps $i=1,2,\ldots$, and in the $i$-th step keep a word $v_i$ and a set $S_i$ of \emph{active} states.
At the beginning, we set $v_1 = \varepsilon$ and $S_1 = R \cap P$.
In the $i$-th step, we do the following.
Let $\ell_i$ be the shortest word mapping some active state from $P$ to $p$.
We apply $\ell_i$ to the set $S$ of active states and let $S'=\delta(S,\ell_i)$.
If $S' \cap P=\emptyset$, we stop.
Otherwise, let $\ell'_i \in \{a,b\}$ be a letter whose transition is undefined for at least half of the states in $S' \cap P$.
At least one of these two letters has this property since each state in $P$ has only one defined transition.
We set $v_{i+1} = v_i \ell_i \ell'_i$ and $S_{i+1} = \delta(S' \setminus \{p\},\ell'_i)$ for the next step; note that we additionally remove $p$ from the set of active states.
If the length of $v_{i+1}$ is at least $h$ or $S_{i+1}=\emptyset$, we stop.
Otherwise, we proceed to the $(i+1)$-th step for $S_{i+1} \subset P$ of size at most $\lfloor |S_i|/2 \rfloor$.

Denote the word $v_{i+1}$ constructed at the last step by $v$.
By the construction, the word $\alpha(w) v$ is non-mortal, because after each step, $p \in \delta(R  \cap P, v_{i+1})$.
The number of active states decreases at least twice in every step, thus the number of performed steps is at most $\lceil \log_2 h \rceil$.
Furthermore, in every step, exactly one active state from $P$ is mapped to a state in $Q \setminus P$, since we use the shortest word such $\ell_i$.
This state is always $p$, which we remove from the active states.
The number of these removed states is at most $\lceil \log_2 h \rceil$ and we know that the other states from $R \cap P$ are mapped to $\bot$ in some step, thus the cardinality of $\delta(R \cap P, v)$ is at most $\lceil \log_2 h \rceil$.

To bound the length of $v$, we observe that the sum of the lengths of words $\ell_i \ell'_i$ from every iteration cannot exceed $h$.
It is because every active state, if it is not mapped to $\bot$, is mapped by the action of $\ell_i \ell'_i$ to a state farther from the root by $|\ell \ell'_i|$.
Thus, the sum of the lengths cannot exceed $h$.

\paragraph*{Summary.}
It follows that the rank of $\alpha(w) v$ is at most $\lceil \log_2 hn \rceil + \lceil \log_2 h \rceil$.
Indeed, the states in~$Q$ that are mapped into the set $P$ by the action of $\alpha(w)$ are then mapped by the action of $v$ to a set of cardinality at most $\lceil \log_2 h \rceil$.
All the other states are mapped by the action of $\alpha(w)$ to a set of size at most $\lceil \log_2 hn \rceil$.
	
Finally, note that the word $\alpha(w)v$ can be found in polynomial time.
In particular, to find $\alpha(w)$, we check $2^{\lceil \log_2 hn \rceil} = \O(hn)$ different words, and every word is processed in polynomial time by the auxiliary algorithm computing $\alpha$.
\end{proof}

\begin{corollary}\label{cor:literal_aut}
Let $\mathrsfs{A}_X$ be a partial literal automaton with $n$ states. If it is synchronizing, its reset threshold is at most $\O(n \log^3 n)$.
If the \v{C}ern\'{y} conjecture holds, then it is at most $\O(n \log^2 n)$.
\end{corollary}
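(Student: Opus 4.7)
The plan is to combine the short low-rank word from Theorem~\ref{thm:log-rank} with the induced automaton machinery of Corollary~\ref{cor:rt_by_induced}, reducing the synchronization of $\mathrsfs{A}_X$ to synchronization of a tiny induced automaton on only $\O(\log n)$ states. First, I would dispose of the degenerate case: if $X = \{x\}$ consists of a single codeword, then by Proposition~\ref{pro:code_size_1} we already have $\rt(\mathrsfs{A}_X) \le |x|/2 \le n/2$, which is well within the claimed bound, so we may assume $|X| \ge 2$.

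Next, I would apply Theorem~\ref{thm:log-rank} to obtain a non-mortal word $w \in \Sigma^*$ with $|w| \le 2h \le 2n$ and rank $r = |\delta(Q,w)| \le \lceil \log_2 hn \rceil + \lceil \log_2 h \rceil = \O(\log n)$. Since $w$ is non-mortal, $R = \delta(Q,w)$ is non-empty, and $\mathrsfs{A}_X$ is strongly connected (literal automata of finite prefix codes have the root reachable from every state via the next codeword completion), so Corollary~\ref{cor:rt_by_induced} applies directly with $W_1 = \{w\}$ and $W_2 = \Sigma^{\le n-1}$. Lemma~\ref{lem:induced_completeness} guarantees that the induced automaton $\mathrsfs{A}^{\mathrm{I}(W_1,W_2)}$ on the state set $R$ is itself strongly connected and synchronizing, so its reset threshold is a meaningful finite quantity. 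Corollary~\ref{cor:rt_by_induced} then yields
\[
\rt(\mathrsfs{A}_X) \;\le\; |w| + (|w| + n - 1)\cdot \rt(\mathrsfs{A}^{\mathrm{I}(W_1,W_2)}) \;=\; \O(n)\cdot \rt(\mathrsfs{A}^{\mathrm{I}(W_1,W_2)}).
\]

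Finally, I would bound $\rt(\mathrsfs{A}^{\mathrm{I}(W_1,W_2)})$ using the general reset threshold bound for strongly connected synchronizing partial automata on $r$ states. By Theorem~\ref{thm:cerny_transfer}, $\mathfrak{C}_\mathrm{P}(r) \le \mathfrak{C}(r) + \O(r^{4/3})$. Plugging in the best known cubic upper bound $\mathfrak{C}(r) = \O(r^3)$ from \cite{Shitov2019,Szykula2018ImprovingTheUpperBound} gives $\rt(\mathrsfs{A}^{\mathrm{I}(W_1,W_2)}) = \O(r^3) = \O(\log^3 n)$, so multiplying by the $\O(n)$ factor above produces the first claimed bound $\rt(\mathrsfs{A}_X) = \O(n \log^3 n)$. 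Under the \v{C}ern\'y conjecture, $\mathfrak{C}(r) = (r-1)^2$, and the subquadratic additive term in Theorem~\ref{thm:cerny_transfer} is dominated, yielding $\rt(\mathrsfs{A}^{\mathrm{I}(W_1,W_2)}) = \O(\log^2 n)$ and hence $\rt(\mathrsfs{A}_X) = \O(n \log^2 n)$.

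Since all heavy lifting was already done in Theorem~\ref{thm:log-rank} (the low-rank linear-length word) and in the induced automaton framework, this corollary is essentially a bookkeeping combination; the only genuine care needed is verifying that the one-word case must be excluded (as required by Theorem~\ref{thm:log-rank}) and that the induced automaton inherits the structural hypotheses of Corollary~\ref{cor:rt_by_induced}, both of which are immediate.
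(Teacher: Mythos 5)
Your proposal is correct and follows essentially the same route as the paper: dispose of the one-word case via Proposition~\ref{pro:code_size_1}, apply Theorem~\ref{thm:log-rank} to get a length-$\O(n)$ word of rank $\O(\log n)$, feed it into Corollary~\ref{cor:rt_by_induced} to reduce to an induced automaton on $\O(\log n)$ states, and bound its reset threshold by transferring the complete-DFA bounds (cubic, or quadratic under the \v{C}ern\'y conjecture) through Theorem~\ref{thm:cerny_transfer}. The extra care you take (strong connectivity of literal automata, applicability of Lemma~\ref{lem:induced_completeness}) is exactly what the paper leaves implicit, so no further changes are needed.
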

\begin{proof}
If $|X|=1$ then the bound follows from \Cref{pro:code_size_1}.
If $|X|\ge 2$, from \Cref{thm:log-rank}, we get a word $w$ of length $\O(n)$ and rank $\O(\log n)$.
Then we use \Cref{cor:rt_by_induced} with $w$, which yields the upper bound $\O(n) + \O(n) \cdot \rt(\mathrsfs{B})$, where $\mathrsfs{B}$ is an induced automaton with $\O(\log n)$ states.
Then we use upper bounds on the reset threshold of a complete DFA (\cite{Shitov2019,Szykula2018ImprovingTheUpperBound}) transferred to strongly connected partial DFAs by \Cref{thm:cerny_transfer}.
\end{proof}

\section{Lower bounds for properly incomplete DFAs}\label{sec:lower_bounds}

We conclude with observations for transferring lower bounds from the complete case to the partial case.
Of course, in general, this is trivial, since a complete DFA is a special case of a partial DFA.
On the other hand, letters with all transitions undefined cannot be used for synchronization.
Hence we need to add a restriction to exclude these cases and see the effect of usable incomplete transitions.
A partial DFA is \emph{properly incomplete} if there is at least one letter whose transition is defined for some state and is undefined for some other state.

For an a partial DFA $\mathrsfs{A}$, let the length of the shortest non-mortal words of rank at most $r$ be called the \emph{rank threshold} $\rt(\mathrsfs{A},r)$.
We show that bounding the rank/reset threshold of a strongly connected properly incomplete DFA is related to bounding the corresponding threshold of a complete DFA.
A general construction for this is as follows.

\begin{definition}[Duplicating automaton]
For a complete DFA $\mathrsfs{A}=(Q,\Sigma,\delta_\mathrsfs{A})$, we construct the \emph{duplicating automaton} $\mathrsfs{A}^\mathrm{D}=(Q \cup Q',\Sigma \cup \{\gamma\},\delta_{\mathrsfs{A}^\mathrm{D}})$ as follows.
Assume that $Q=\{q_1,\ldots,q_n\}$.
Then $Q'=\{q'_1,\ldots,q'_n\}$ is a set of fresh states disjoint with $Q$ and $\gamma \notin \Sigma$ is a fresh letter.
For all $1 \le i \le n$ and each $a \in \Sigma$, we define:
\[\delta_{\mathrsfs{A}^\mathrm{D}}(q_i,a) = q_i,\quad\delta_{\mathrsfs{A}^\mathrm{D}}(q_i,\gamma) = q'_i,\quad\delta_{\mathrsfs{A}^\mathrm{D}}(q'_i,a) = \delta_\mathrsfs{A}(q_i,a),\quad\text{and }\delta_{\mathrsfs{A}^\mathrm{D}}(q'_i,\gamma) = \bot.\]
\end{definition}

The duplicating automaton turns out to be a partial DFA counterpart to the recent Volkov's construction of complete DFAs \cite{Volkov2019IndempotentLetters}.
The duplicating automaton $\mathrsfs{A}^\mathrm{D}$ has twice the number of states of $\mathrsfs{A}$ and is properly incomplete.
Also, it is strongly connected if $\mathrsfs{A}$ is.

\begin{proposition}\label{pro:lower_bound}
Let $\mathrsfs{A}=(Q,\Sigma,\delta_\mathrsfs{A})$ be a strongly connected complete DFA.
For all $1 \le r < n$, we have $\rt(\mathrsfs{A}^\mathrm{D},r) = 2\rt(\mathrsfs{A},r)$.
\end{proposition}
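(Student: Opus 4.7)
The plan is to prove the two inequalities $\rt(\mathrsfs{A}^\mathrm{D},r) \le 2\rt(\mathrsfs{A},r)$ and $\rt(\mathrsfs{A}^\mathrm{D},r) \ge 2\rt(\mathrsfs{A},r)$ separately. The upper bound is constructive: given a shortest word $w = a_1 a_2 \cdots a_k \in \Sigma^*$ of rank $r$ in $\mathrsfs{A}$ (so $k = \rt(\mathrsfs{A},r)$), I take the candidate $w' = \gamma a_1 \gamma a_2 \cdots \gamma a_k$ of length $2k$. Tracing its action starting from $Q \cup Q'$, the leading $\gamma$ kills $Q'$ and bijectively sends $Q$ onto $Q'$, so the image becomes $Q'$; then a short induction on $j$ shows that the prefix $\gamma a_1 \gamma \cdots \gamma a_j$ yields image exactly $\delta_\mathrsfs{A}(Q, a_1 \cdots a_j) \subseteq Q$, because each $\gamma$ relocates the current image from $Q$ back to the corresponding subset of $Q'$, from which the next $a_{j+1}$ executes the $\mathrsfs{A}$-transition of $a_{j+1}$. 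The final image thus has size $r$.

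For the lower bound I would set up the following structural invariant. Every image reachable from $Q \cup Q'$ can be written uniquely as $A \cup B'$ with $A, B \subseteq Q$, where $B' = \{q'_i : q_i \in B\}$, and after at least one letter has been applied we always have $A = \emptyset$ or $B = \emptyset$: indeed, a letter $a \in \Sigma$ fixes $Q$ pointwise and sends $Q'$ into $Q$ (so $B_{\text{new}} = \emptyset$), while $\gamma$ sends $Q$ into $Q'$ bijectively and is undefined on $Q'$ (so $A_{\text{new}} = \emptyset$). Two consequences follow immediately: (i) two consecutive occurrences of $\gamma$ yield the empty image, so in any non-mortal word the $\gamma$'s are separated by at least one letter from $\Sigma$; and (ii) within any maximal block of consecutive $\Sigma$-letters appearing after the first letter of the word, only the first letter of the block influences the image, since after it the $B$-part is empty and further $\Sigma$-letters act as identity.

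Now let $w'$ be any word of rank $r$ in $\mathrsfs{A}^\mathrm{D}$ and write $w' = u_0 \gamma u_1 \gamma \cdots u_{t-1} \gamma u_t$ with each $u_i \in \Sigma^*$. Since $r < n$, at least one $\gamma$ must occur (otherwise the image stays $Q$), so $t \ge 1$, and by the invariant $|u_i| \ge 1$ for $1 \le i \le t-1$. Let $a_i$ denote the first letter of $u_i$. By iterating the two transition rules I then compute the final image explicitly in two cases. If $u_t = \varepsilon$, the image is $(\delta_\mathrsfs{A}(Q, a_1 \cdots a_{t-1}))' \subseteq Q'$, so $a_1 \cdots a_{t-1}$ witnesses $\rt(\mathrsfs{A},r) \le t-1$, whereas $|w'| \ge t + (t-1) = 2t-1 \ge 2\rt(\mathrsfs{A},r)+1$. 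If $u_t \ne \varepsilon$ with first letter $b$, the image is $\delta_\mathrsfs{A}(Q, a_1 \cdots a_{t-1} b) \subseteq Q$, so $\rt(\mathrsfs{A},r) \le t$ while $|w'| \ge t + (t-1) + 1 = 2t \ge 2\rt(\mathrsfs{A},r)$. In either case $|w'| \ge 2\rt(\mathrsfs{A},r)$, which combined with the upper bound gives equality.

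The main obstacle is identifying and exploiting the $A \cup B'$ invariant cleanly; once it is established, the redundancy of non-initial letters inside each $\Sigma$-block becomes transparent, and the rest is a routine case distinction on whether $w'$ ends with $\gamma$ or with a letter of $\Sigma$. Pleasantly, the asymmetry is in our favor: the case $u_t = \varepsilon$ even gives a strict surplus $|w'| \ge 2\rt(\mathrsfs{A},r)+1$, so the tight bound $2\rt(\mathrsfs{A},r)$ is attained only by the specific construction used in the upper bound.
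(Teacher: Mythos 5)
Your proof is correct and takes essentially the same approach as the paper: the upper bound uses the identical construction $\gamma a_1\gamma a_2\cdots\gamma a_k$, and the lower bound rests on the same observations (consecutive $\Sigma$-letters are redundant, $\gamma\gamma$ is mortal, and the first letters of the $\Sigma$-blocks give a word of the same rank in $\mathrsfs{A}$). The only difference is presentational: the paper normalizes a shortest word of rank $r$ in $\mathrsfs{A}^\mathrm{D}$ into the form $\gamma a_1\cdots\gamma a_k$ by minimality, whereas you lower-bound the length of an arbitrary rank-$r$ word via the block decomposition and a count, which amounts to the same argument.
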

\begin{proof}
Let $1 \le r < n$, and let $w$ be a shortest word of rank at most $r$ in $\mathrsfs{A}^\mathrm{D}$.
We observe that $w = \gamma a_1 \gamma a_2 \gamma a_3 \dots \gamma a_k$, for some letters $a_i \in \Sigma$.
Indeed, the action of $a_i a_j$ is the same as the action of~$a_i$, thus two consecutive letters from $\Sigma$ cannot occur in a shortest word.
Also, $\gamma^2$ cannot occur, as it is a mortal word.
There is no letter $a_i$ at the beginning of $w$, because $\delta_{\mathrsfs{A}^\mathrm{D}}(Q,a_i \gamma)=\delta_{\mathrsfs{A}^\mathrm{D}}(Q,\gamma)=Q'$.
Finally, $w$ does not contain $\gamma$ at the end, because $\gamma$ at the end cannot change the rank, i.e., for all $S \subseteq Q'$, we have $|\delta_{\mathrsfs{A}^\mathrm{D}}(S,a_i \gamma)| = |\delta_{\mathrsfs{A}^\mathrm{D}}(S,a_i)|$.

We observe that the word $w' = a_1 a_2 \dots a_k$ has rank $r$ in $\mathrsfs{A}$.
Indeed, $\delta_{\mathrsfs{A}^\mathrm{D}}(Q \cup Q',w) \subset Q$, and $q_i \in \delta_{\mathrsfs{A}^\mathrm{D}}(Q \cup Q',w)$ if and only if $q_i \in \delta_\mathrsfs{A}(Q,w')$.
Since $w$ is a word of rank $r$ in $\mathrsfs{A}^\mathrm{D}$, so is $w'$ in $\mathrsfs{A}$.
Thus $\rt(\mathrsfs{A}^\mathrm{D},r)=|w|=2|w'| \ge 2\rt(\mathrsfs{A},r)$.

Conversely, having a shortest word $w' = a_1 a_2 \dots a_k$ of rank $r$ in $\mathrsfs{A}$, we can construct\\${w = \gamma a_1 \gamma a_2 \gamma a_3 \dots \gamma a_k}$ of rank $r$ in $\mathrsfs{A}^\mathrm{D}$, thus $\rt(\mathrsfs{A}^\mathrm{D},r) \le 2\rt(\mathrsfs{A},r)$.
\end{proof}

From  \Cref{pro:lower_bound}, it follows that we cannot expect an upper bound on the reset threshold of a properly incomplete strongly connected DFA better than $0.04135 n^3 + \O(n^2)$, unless we can improve the best general upper bound on the reset threshold of a complete DFA, which currently is roughly $0.1654 n^3 + \O(n^2)$ \cite{Shitov2019}.
We can also show a lower bound on the largest possible reset threshold, using the Rystsov's construction of DFAs with long shortest mortal words \cite{Rystsov1997ResetWordsForCummutativeAndSolvableAutomata}.

\begin{proposition}\label{pro:properly_incomplete_extremal}
For every $n$, there exists a strongly connected properly incomplete $n$-state DFA with the reset threshold $\frac{n^2 - n}{2}$.
\end{proposition}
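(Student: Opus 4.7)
The plan is to exhibit an explicit family of $n$-state strongly connected properly incomplete automata whose reset thresholds equal $\frac{n^2-n}{2}$, by modifying Rystsov's extremal automata for the mortal-word bound. From~\cite{Rystsov1997ResetWordsForCummutativeAndSolvableAutomata}, for every $m$ there is an $m$-state strongly connected partial automaton $\mathrsfs{R}_m$ whose shortest mortal word has length exactly $\binom{m}{2}$; mortalization in $\mathrsfs{R}_m$ proceeds state by state, interleaved with short permutation actions that rotate the next surviving state into ``kill position.''

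Starting from $\mathrsfs{R}_{n-1}$, I would construct the desired $n$-state automaton $\mathrsfs{A}$ by adjoining a fresh state $s$, redirecting every transition undefined in $\mathrsfs{R}_{n-1}$ to land at $s$, and equipping $s$ with carefully chosen transitions so that (i) $\mathrsfs{A}$ remains strongly connected, (ii) at least one transition remains undefined, keeping $\mathrsfs{A}$ properly incomplete, and (iii) $s$ effectively behaves as a ``sink'' during the critical phase of the reset. For the upper bound, the mortal word $w$ of $\mathrsfs{R}_{n-1}$, of length $\binom{n-1}{2}$, now drives every state of $\mathrsfs{R}_{n-1}$ into $s$ in $\mathrsfs{A}$; appending a short suffix of length at most $n-1$ consolidates the image back into a single state, yielding a reset word of total length $\binom{n-1}{2} + (n-1) = \binom{n}{2} = \frac{n^2-n}{2}$.

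For the matching lower bound, I would establish a correspondence between reset words of $\mathrsfs{A}$ and mortal words of $\mathrsfs{R}_{n-1}$: since the original transitions of $\mathrsfs{R}_{n-1}$ are injective on their domains, any pair-compression of two states of $\mathrsfs{R}_{n-1}$ can occur only through the redirection edges to $s$. Consequently, any reset word for $\mathrsfs{A}$ must encode a ``mortalization trace'' for $\mathrsfs{R}_{n-1}$ whose length is at least $\binom{n-1}{2}$ by Rystsov's tight bound; the extra $n-1$ steps required to merge the last surviving state with the accumulated image at $s$ bring the total up to the claimed $\frac{n^2-n}{2}$.

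The main obstacle will be calibrating the transitions at $s$ precisely so that no shortcut is opened. In particular, redirecting many previously undefined edges of $\mathrsfs{R}_{n-1}$ to the single state $s$ risks creating pair-compressions absent from $\mathrsfs{R}_{n-1}$ itself, which could reduce the reset threshold below $\binom{n}{2}$. Handling this carefully --- for instance, by making $s$ act as a near-identity under most letters during the critical phase, or by staggering the redirections via an auxiliary letter so that at most one state of $\mathrsfs{R}_{n-1}$ can reach $s$ per time step --- will be the delicate part of the argument.
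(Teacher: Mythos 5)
There is a genuine gap here, on two counts. First, the quantitative basis of your construction is off: the Rystsov automaton cited in the paper has shortest \emph{mortal} word of length $\frac{m^2+m}{2}$ for $m$ states, not $\binom{m}{2}$, so even the intended arithmetic breaks --- gluing a fresh state onto $\mathrsfs{R}_{n-1}$ and appending a consolidation suffix would naturally give a reset word of length about $\frac{n^2-n}{2}+(n-1)$, not the claimed $\frac{n^2-n}{2}$, and your matching lower bound inherits the same miscount. Second, and more seriously, the heart of the proof --- the choice of transitions at the new state $s$ and the claimed correspondence between reset words of $\mathrsfs{A}$ and mortal words of $\mathrsfs{R}_{n-1}$ --- is exactly the part you leave open. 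Once killed states are redirected into a live state $s$, that state must have outgoing transitions (for strong connectivity) and its orbit re-enters $\mathrsfs{R}_{n-1}$, so merges can occur that do not correspond to any mortalization step; conversely, some natural ways of equipping $s$ (e.g.\ a fresh escape letter defined only on $s$, or making $s$ absorbing on $\Sigma$) either destroy strong connectivity or create exactly the shortcut you worry about: funnel one state into $s$, apply the letter undefined elsewhere, and you have a very short reset word. Asserting that ``no shortcut is opened'' is not a detail to be calibrated later; it is the whole lower-bound argument, and as stated it is missing.

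For contrast, the paper needs no modification at all. It takes Rystsov's $n$-state strongly connected automaton itself, which has exactly one state $r$ with undefined outgoing transitions, and exploits two observations: (i) if $w=w'a$ is a shortest mortal word, then $\delta(Q,w')=\{r\}$, so $w'$ is already a reset word; and (ii) any reset word can be extended to a mortal word by at most $n$ further letters (reach $r$ in at most $n-1$ steps, then one killing letter), so $\rt(\mathrsfs{A}) \ge \frac{n^2+n}{2} - n = \frac{n^2-n}{2}$, with tightness read off the construction. If you want to keep your two-step plan, you would need to fully specify the transitions at $s$ and prove the no-shortcut property; the paper's route shows this extra machinery is unnecessary, since the ``one deficient state'' structure of Rystsov's automaton already converts the mortal-word lower bound into a reset-word lower bound with only an additive loss of $n$.
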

\begin{proof}
Let $\mathrsfs{A} = (Q, \Sigma, \delta)$ be a strongly connected partial DFA with only one state $r$ having undefined outgoing transitions for some letter, and let $w$ be its shortest mortal word. Let $w = w'a$, where~$a \in \Sigma$.
Then $\delta(Q, w') = \{r\}$.
Moreover, since the DFA is strongly connected, every prefix $w''$ of $w'$ of length smaller than $|w|-(|Q| - 1)$ has the action mapping $Q$ to a set of size at least two; otherwise, there is a shorter word mapping $Q$ to $\{r\}$.
Thus we get that $\mathrsfs{A}$ is synchronizing, and the length of its shortest reset word is at least $|w| - (|Q| - 1) - 1 = |w| - |Q|$.
For every $n$, Rystsov \cite{Rystsov1997ResetWordsForCummutativeAndSolvableAutomata} constructed a strongly connected properly incomplete $n$-state DFA with the only one state having undefined outgoing transitions. The length of its shortest mortal word is equal to $\frac{n^2 + n}{2}$. Thus we get a lower bound of $\frac{n^2 + n}{2} - n$ on the length of its shortest reset word. It can be seen directly from the construction of the DFA that this bound is tight.
\end{proof}

\acknowledgements{We thank the anonymous reviewers for their comments that improved the presentation of the paper.}

\bibliographystyle{plainurl}
\bibliography{synchronization}
\end{document}